\newtheorem{lemma}{Lemma}
\newtheorem{theorem}{Theorem}
\newtheorem{corollary}{Corollary}
\newtheorem{claim}{Claim}
\numberwithin{equation}{section}
\numberwithin{table}{section}
\newcommand{\junk}[1]{}
\renewcommand{\l}{\lambda}
\newcommand{\vol}{{\rm vol}}
\def\b1{{\bf 1}}
\def\eps{{\epsilon}}
\def\ver{{\phi^V}}
\def\vol{\operatorname{vol}} 
\def\polylog{\operatorname{polylog}} 
\def\expe{\mathbb E} 
\def\prob{\mathbb P} 
\renewcommand{\leq}{\leqslant}
\renewcommand{\geq}{\geqslant}
\newcommand{\E}{\mathop{\mathbb E}}
\renewcommand{\Pr}{\mathbb P}
\newcommand{\R}{\mathbb R}
\newcommand{\abs}[1]{\lvert#1\rvert}
\newcommand{\card}[1]{\lvert#1\rvert}
\newcommand{\norm}[1]{\lVert#1\rVert}
\newcommand{\set}[1]{\{#1\}}
\newcommand{\inner}[1]{\langle#1\rangle}
\newcommand{\domain}[1]{[#1]}
\newcommand{\var}{\mathop{\textnormal{Var}}\nolimits}
\newcommand{\cov}{\mathop{\textnormal{Cov}}\nolimits}
\newcommand{\gap}{1-\lambda}
\newcommand{\cgap}{\varphi}
\newcommand{\mix}{{\rm mix}}
\newcommand{\1}{\mathbbm{1}}
\renewcommand{\le}{\leq}
\title{Random Walks and Evolving Sets:\\
Faster Convergences and Limitations}
\author{
Siu On Chan\thanks{The Chinese University of Hong Kong, \protect\url{sochan@gmail.com}.}
\and
Tsz Chiu Kwok\thanks{\'Ecole polytechnique f\'ed\'erale de Lausanne, \protect\url{tckwok0@gmail.com}.}
\and
Lap Chi Lau\thanks{University of Waterloo, \protect\url{lapchi@uwaterloo.ca}. 
}
}
\date{}
\begin{document}

\begin{titlepage}
\def\thepage{}
\thispagestyle{empty}

\maketitle

\begin{abstract}
\noindent
Analyzing the mixing time of random walks is a well-studied problem with applications in random sampling and more recently in graph partitioning.
In this work, we present new analysis of random walks and evolving sets using more combinatorial graph structures, and show some implications in approximating small-set expansion.
On the other hand, we provide examples showing the limitations of using random walks and evolving sets in disproving the small-set expansion hypothesis.
\begin{enumerate}
\item We define a combinatorial analog of the spectral gap, and use it to prove the convergence of non-lazy random walks.  
A corollary is a tight lower bound on the small-set expansion of graph powers for any graph.
\item We prove that random walks converge faster when the robust vertex expansion of the graph is larger.
This provides an improved analysis of the local graph partitioning algorithm using the evolving set process.
\item We give an example showing that the evolving set process fails to disprove the small-set expansion hypothesis.
This refutes a conjecture of Oveis Gharan and shows the limitations of local graph partitioning algorithms in approximating small-set expansion.
\end{enumerate}
\end{abstract}

\end{titlepage}

\newpage

\section{Introduction}

Analyzing the mixing time of random walks is a fundamental problem with many applications in random sampling~\cite{LPW}.
The evolving set process is an elegant tool introduced by Morris and Peres~\cite{MP} to provide sharp analyses of mixing time (see the survey~\cite{MT}).
Recently, random walks and the evolving set process have also been used in designing local algorithms for graph partitioning~\cite{ST,ACL,AP,OT,KL12}.
The evolving set process is the most powerful among the local graph partitioning algorithms, and Oveis Gharan~\cite{O} even conjectured that it can be used to disprove the small-set expansion hypothesis~\cite{RS}.
A common theme of this paper is to study the power and limitations of this technique, from analyzing mixing time to local graph partitioning and approximating small-set expansion.

\subsubsection*{Random Walks and Mixing Time}

We consider random walks in a weighted undirected graph $G=(V,E)$ with a nonnegative weight $w(e)$ on each edge $e \in E$.
Let $n = |V|$ and $m = |E|$.
For simplicity, we assume that the graph $G$ is regular
and the weights are scaled such that the weighted degree of $i$ is $\sum_{j: ij \in E} w(i,j) = 1$ for all $i \in V$ throughout this paper,
but we will mention how to deal with general graphs in Section~\ref{s:general}.
Let $A$ be the $n \times n$ random walk matrix of $G$ with $A_{ij} = w(i,j)$.
Let $p_0 \in \R^n$ be an initial probability distribution, and let $p_t := A^t p_0$ be the probability distribution after $t$ steps of random walks.
When $G$ is connected and non-bipartite,
it is well-known that $p_t$ will converge to the uniform distribution.
The mixing time is defined as 
\[T_{\mix} = \min\{t: \norm{p_t - \frac{\vec{1}}{n}}_1 \leq 1/4 {\rm~for~all~initial~distribution~} p_0\}.\]

One approach to analyze the mixing time is to look at the eigenvalues of the random walk matrix.
Let the eigenvalues of $A$ be $1 = \lambda_1 \geq \lambda_2 \geq \ldots \geq \lambda_n \geq -1$.
By basic spectral graph theory, it can be shown that $1 > \lambda_2$ if and only if $G$ is connected, and $\lambda_n > -1$ if and only if $G$ is non-bipartite.
This implies that, when $G$ is connected and non-bipartite, $p_t$ will converge to the first eigenvector, and thus the uniform distribution is the unique limiting distribution of the random walk.
Let $\lambda:=\max\{\lambda_2, |\lambda_n|\}$, 
and let $\gap$ be the spectral gap of the random walk matrix.
A standard calculation shows that the mixing time is upper bounded by $O(\log(n)/(\gap))$.

For many problems, it is useful to have combinatorial characterizations of graphs with fast mixing time.
For two subsets $S, T \subseteq V$,
let $E(S,T)$ be the set of edges with one vertex in $S$ and another vertex in $T$, and let $w(S,T) := \sum_{e \in E(S,T)} w(e)$.
The expansion of a set $S \subseteq V$ and the expansion of a graph $G$ are defined as 
\[\phi(S) := \frac{w(S,V-S)}{|S|} \quad {\rm and} \quad \phi(G) := \min_{S:|S| \leq n/2} \phi(S).\]
Cheeger's inequality for graphs~\cite{A,AM} states that 
\[\frac12 (1-\lambda_2) \leq \phi(G) \leq \sqrt{2(1-\lambda_2)},\] 
and thus $1-\lambda_2 = \Omega(\phi(G)^2)$.
Having large conductance is not enough to guarantee fast mixing time, as $\lambda_n$ may be very close to $-1$.
There is a simple trick to bypass this issue: one can guarantee that $\lambda_n \geq 0$ by considering ``lazy'' random walks (with probability $1/2$ stay put),
and this implies that the mixing time of lazy random walks is upper bounded by $O(\log(n)/\phi(G)^2)$.

Another approach to analyze the mixing time is to directly use the graph structures.
Lov\'asz and Simonovits~\cite{LS} developed a combinatorial method to prove that the mixing time of lazy random walks is $O(\log(n)/\phi(G)^2)$.
This method is more flexible in incorporating additional graph structures.
Given a parameter $0 < \delta \leq 1/2$,
the $\delta$-small-set expansion is defined as 
\[\phi_{\delta}(G) := \min_{S:|S| \leq \delta n} \phi(S).\]
Lov\'asz and Kannan~\cite{LK} proved that the mixing time of lazy random walks is 
\[T_{\mix} \leq \int_{1/n}^{1/2} \frac{dx}{x \phi_x(G)^2}.\]
As we will discuss in more details shortly,
this combinatorial approach can also be used to design local graph partitioning algorithms for approximating small-set expansion.

\subsubsection*{Evolving Sets}

The evolving set process is a Markov chain on subsets of $V$ with the following transition rule:
If the current set is $S$,
choose $U$ uniformly from $[0,1]$ and the next set is defined as 
\[\tilde{S} := \{y: w(y,S) \geq U\}.\]
Morris and Peres~\cite{MP} used the evolving set process to strengthen Lov\'asz and Kannan result to bound the uniform mixing time of lazy random walks by the expansion profile. 
An important definition in their analysis is the gauge of a set $S$ and the gauge of a graph $G$, which are defined as 
\[\psi(S) = 1 - \expe[\sqrt{|\tilde S| / |S|}] \quad {\rm and} \quad \psi(G) := \min_{S:|S| \leq n/2} \psi(S).\]
Morris and Peres~\cite{MP} showed that the convergence rate of random walks is bounded by the gauge, and the mixing time of random walks is $O(\log(n) / \psi(G))$.
They proved that $\psi(G) \geq \Omega(\phi^2(G))$ for lazy graphs, and this implies that the mixing time of lazy random walks is upper bounded by $O(\log(n) / \phi(G)^2)$.
We refer the interested reader to~\cite{LPW} for an excellent introduction of the evolving set process.

\subsubsection*{Local Graph Partitioning Algorithms}

Spielman and Teng~\cite{ST} used random walks to design the first local graph partitioning algorithm, which outputs a set $S$ of approximately optimal expansion with running time depends only on $|S|$ and $\polylog(n)$.
Their analysis is based on the approach of Lov\'asz and Simonovits~\cite{LS} on analyzing mixing time.
Andersen and Peres~\cite{AP} and Oveis Gharan and Trevisan~\cite{OT} used the evolving set process for local graph partitioning, and provide the current best known algorithm in terms of both the approximation ratio and the running time.
\begin{theorem}[\cite{AP,OT}] \label{t:approx}
For any target set $S^*$ and any $\eps > 0$, there is a subset $S'$ with $|S'| \geq |S^*|/2$, such that if we start the evolving set process with $\{v\}$ for $v \in S'$,
then with constant probability the algorithm returns a set $S$ with $\phi(S) = O(\sqrt{\phi(S^*)/\eps})$ and $|S| = O(|S^*|^{1+\eps})$, and the running time is $O(|S^*|^{1+2\eps} \phi(S)^{-\frac12} \log^2 n)$.
\end{theorem}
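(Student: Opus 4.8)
The plan is to realize the statement through a ``nibble''-style local algorithm built on the evolving set process. Run the process $S_0=\{v\}$, $S_1, S_2,\ldots$ in its volume-biased form (the variant of Morris--Peres, so the set survives almost surely and tends to grow), stopping it either after a horizon of $T$ steps --- where $T=\Theta(\log(|S^*|)/\phi^{*2})$ with target expansion $\phi^*:=\Theta(\sqrt{\phi(S^*)/\eps})$ --- or as soon as $|S_t|$ exceeds a cap of $\Theta(|S^*|^{1+2\eps})$, whichever comes first; among all $S_j$ produced before the stopping time, output the one of smallest expansion. For the seed set, call $v\in S^*$ \emph{good} if the walk from $v$ keeps a constant fraction of its mass ``near $S^*$'' (e.g. the Lov\'asz--Simonovits curve value at scale $|S^*|$ stays above $1/2$) for every $t\le T$. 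Summing the escaped mass over $v\in S^*$ and telescoping over the $T$ steps bounds it by $T\cdot w(S^*,V-S^*)=T\,\phi(S^*)|S^*|$, so by Markov at least half of $S^*$ escapes no more than twice the average; these form $S'$ with $|S'|\ge|S^*|/2$. The exponent juggling starts here: $\phi^*$, hence $T$, must be taken so that $T\,\phi(S^*)$ is small, which is exactly why $\phi^{*2}$ is $\Theta(\phi(S^*)/\eps)$ and not $\Theta(\phi(S^*))$.

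Next, analyze the process from a fixed good seed $v\in S'$. Suppose toward a contradiction that every $S_j$ reached before stopping has $\phi(S_j)>\phi^*$. Using the gauge bound already recalled in the paper --- $\psi(S)=\Omega(\phi(S)^2)$ --- we get $\psi(S_j)=\Omega(\phi^{*2})$ throughout, and the standard evolving set estimate says the volume-biased process then contracts geometrically at rate $1-\Omega(\phi^{*2})$ per step in the relevant potential (concretely, $M_t:=1/\sqrt{|S_t|}$ satisfies $\E[M_{t+1}\mid S_t]=(1-\psi(S_t))M_t$, so $\E[M_T]\le(1-\Omega(\phi^{*2}))^T$). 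With $T=\Theta(\log(|S^*|)/\phi^{*2})$ this is $|S^*|^{-\Omega(1)}$, which forces $|S_T|$ past the cap with high probability and so contradicts that the horizon --- not the cap --- stopped the process. If instead the cap stopped it early, a mixing-type argument (again through the gauge and the concavity of the Lov\'asz--Simonovits curve) shows that a set cannot grow from $\{v\}$ all the way to size $|S^*|^{1+2\eps}$ while its expansion stays above $\phi^*$, except with probability bounded away from $1$. Either way, with constant probability some $S_j$ produced before stopping has $\phi(S_j)\le\phi^*$; exploiting the gap between the cap exponent $1+2\eps$ and the target $1+\eps$, the same curve estimates place this first such $S_j$ at size $O(|S^*|^{1+\eps})$, and that is what the algorithm returns.

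For the running time, note that one step of the process from a current set of size $s$ reduces to handling the vector $y\mapsto w(y,S_t)$, which is supported on the neighborhood of $S_t$; a careful implementation that samples the next set from its threshold structure rather than computing it in full keeps the per-step cost near-linear, and summing over the $T$ steps while the size stays within the cap gives $O(|S^*|^{1+2\eps}\phi(S)^{-1/2}\log^2 n)$ after substituting $\phi^*=\Theta(\phi(S))$.

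The hard part is not any individual estimate but making the three budgets of the middle step hold simultaneously: $T$ must be large enough ($\gtrsim\log|S^*|/\phi^{*2}$) for the supermartingale contraction to force the cap contradiction; $T\,\phi(S^*)$ must nonetheless be small enough that the seed-averaging keeps a constant fraction of $S^*$ good, which forces $\phi^{*2}=\Theta(\phi(S^*)/\eps)$; and the sparse cut must surface while the size is only $O(|S^*|^{1+\eps})$ rather than merely within the cap $O(|S^*|^{1+2\eps})$, which needs the finer Lov\'asz--Simonovits curve control rather than the crude martingale bound. Reconciling these constraints is exactly what pins down the exponents $1+\eps$ and $1+2\eps$ and the $\sqrt{1/\eps}$ loss in the expansion guarantee.
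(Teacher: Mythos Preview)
This theorem is not proved in the paper at all: it is stated with a citation to \cite{AP,OT} as prior work, and the only thing the paper says about its proof is a one-sentence summary in the ``Local Graph Partitioning'' paragraph at the end of Section~\ref{s:vertex}, namely that Lemma~5.2 of \cite{OT} shows the volume-biased evolving set process returns a set $S$ with $\psi(S)=O(\phi(S^*)/\eps)$, and then the bound $\psi(S)=\Omega(\phi(S)^2)$ yields $\phi(S)=O(\sqrt{\phi(S^*)/\eps})$. So there is essentially nothing in the paper to compare your argument against.

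That said, your sketch is consistent with that summary and with the actual arguments in \cite{AP,OT}: the volume-biased process, the supermartingale $M_t=|S_t|^{-1/2}$ with $\E[M_{t+1}\mid S_t]=(1-\psi(S_t))M_t$, the averaging over $v\in S^*$ to extract the good seed set $S'$, and the stopping-time dichotomy all appear there in some form. The one place you are noticeably hand-wavy is exactly the place you flag yourself: arguing that the first sparse $S_j$ has size $O(|S^*|^{1+\eps})$ rather than merely being within the cap $O(|S^*|^{1+2\eps})$. In \cite{OT} this is handled not by ``Lov\'asz--Simonovits curve control'' per se but by a careful work-versus-growth accounting for the volume-biased process (their Lemma~5.2 and the surrounding analysis); your paragraph gestures at the right tension between the three budgets but does not supply the mechanism. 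As a proof plan this is fine; as a self-contained proof that step would need to be filled in.
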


\subsubsection*{Small Set Expansion}

The small set expansion hypothesis proposed by Raghavendra and Steurer~\cite{RS} states that for any $\eps$, there exists $\delta$ such that it is NP-hard to distinguish the following two cases:
\begin{enumerate}
\item There is a set $S$ with $\phi(S) \leq \eps$ and $|S| \leq \delta n$;
\item $\phi(S) \geq 1-\eps$ for every set $S$ with $|S| \leq \delta n$.
\end{enumerate}
This hypothesis is shown to be closely related to the unique games conjecture~\cite{RS}.
The local graph partitioning algorithms provide bicriteria approximation algorithms for computing small set expansion $\phi_{\delta}(G)$.
It is observed in~\cite{OT,KL12} that if the output size guarantee of the above local graph partitioning algorithm is improved from $O(|S^*|^{1+\eps})$ to $O(|S^*|)$, then the small-set expansion hypothesis is false.
Oveis Gharan~\cite{O} suggested a plan to prove such an output size guarantee using the evolving set process.

\subsection{Our Results}

\subsubsection*{Combinatorial Analog of Spectral Gap}

We define a combinatorial analog of spectral gap with which we can directly analyze the mixing time of non-lazy random walks.
Recall that $\phi(G)$ is defined as
\[
\min_{S \subseteq V, |S| \leq n / 2} \frac{w(S, V - S)}{|S|}
= \min_{S \subseteq V, |S| \leq n / 2}~ 1 - \frac{w(S, S)}{|S|}.
\]
We define the combinatorial gap as 
\begin{equation} \label{e:cgap}
\cgap(G) := \min_{S \subseteq V, T \subseteq V, |S| = |T| \leq n / 2}~ 1 - \frac{w(S, T)}{|S|}.
\end{equation}
Note that $\cgap(G)$ is small if there exists a near-bipartite component.
We prove the following combinatorial analog of the spectral analysis of mixing time.
\begin{theorem} \label{t:cgap}
For any graph $G$, $\psi(G) \geq \Omega(\cgap^2(G))$.
\end{theorem}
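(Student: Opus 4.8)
The plan is to establish the pointwise bound $\psi(S)\ge \beta(S)^2/8$ for every $S$ with $|S|\le n/2$, where $\beta(S):=\min_{T:\,|T|=|S|}\bpar{1-w(S,T)/|S|}$. Since $\cgap(G)=\min_{|S|\le n/2}\beta(S)$ and $\psi(G)=\min_{|S|\le n/2}\psi(S)$, this immediately yields $\psi(G)\ge \cgap(G)^2/8=\Omega(\cgap^2(G))$.

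Fix $S$ with $|S|=k$, write $\beta=\beta(S)$, and for $U\in[0,1]$ set $N(U):=|\{y:w(y,S)\ge U\}|$, so that $|\tilde S|$ is distributed as $N(U)$ for $U$ uniform on $[0,1]$. Using regularity (weighted degrees normalized to $1$, so $w(y,S)\in[0,1]$) and swapping the integral with the sum, one obtains the martingale identity $\int_0^1 N(U)\,dU=\expe[|\tilde S|]=\sum_y w(y,S)=k$. Expanding $(\sqrt{N(U)}-\sqrt k)^2=N(U)-2\sqrt k\sqrt{N(U)}+k$ and integrating then gives the exact identity
\[
\psi(S)=1-\frac1{\sqrt k}\int_0^1\sqrt{N(U)}\,dU=\frac1{2k}\int_0^1\bpar{\sqrt{N(U)}-\sqrt k}^2\,dU,
\]
so it suffices to prove $\int_0^1(\sqrt{N(U)}-\sqrt k)^2\,dU\ge \beta^2k/4$.

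Next I would relate $\beta$ to the ``deficit'' of the evolving set below its mean size. Sorting the values $w(\cdot,S)$ as $a_1\ge a_2\ge\cdots\ge a_n$, the quantity $\sum_{i=1}^k a_i$ is exactly $\max_{|T|=k}w(S,T)$, hence equals $(1-\beta)k$ by the definition of $\beta$. Writing $N(U)=|\{i:a_i\ge U\}|$ and applying Abel summation, the integral $\int_0^1(k-N(U))^+\,dU$ telescopes to
\[
\int_0^1\bpar{k-N(U)}^+\,dU=k-\sum_{i=1}^k a_i=\beta k,
\]
i.e. the expected amount by which $|\tilde S|$ falls below $k$ equals exactly $\beta k$.

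Finally, two applications of Cauchy--Schwarz close the argument. Since $N$ is decreasing in $U$, the set $\{U:N(U)<k\}$ is an interval of measure at most $1$, and on it $\sqrt{N(U)}+\sqrt k\le 2\sqrt k$; hence
\[
\beta k=\int_{\{N(U)<k\}}\bpar{\sqrt k-\sqrt{N(U)}}\bpar{\sqrt k+\sqrt{N(U)}}\,dU\le 2\sqrt k\int_{\{N(U)<k\}}\bpar{\sqrt k-\sqrt{N(U)}}\,dU,
\]
and therefore, again by Cauchy--Schwarz over the interval $\{U:N(U)<k\}$,
\[
\frac{\beta^2 k}{4}\le\left(\int_{\{N(U)<k\}}\bpar{\sqrt k-\sqrt{N(U)}}\,dU\right)^{2}\le\int_0^1\bpar{\sqrt{N(U)}-\sqrt k}^2\,dU,
\]
which combined with the variance identity gives $\psi(S)\ge\beta^2/8$. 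I expect the only genuine work to be the deficit identity $\int_0^1(k-N(U))^+\,dU=\beta k$: one must pass carefully between the threshold description of $\tilde S$ and the sorted weights $a_i$ (handling ties), and identify $\sum_{i\le k}a_i$ with $\max_{|T|=k}w(S,T)$. Everything else is a short computation, and the one clever move is to isolate the deficit $\int(k-N)^+$ as the right intermediate quantity and extract a $\beta^2$ factor from it via the two Cauchy--Schwarz steps, rather than trying to estimate $\sqrt{N(U)}$ pointwise.
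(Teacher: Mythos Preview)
Your argument is correct and arrives at the same constant $\psi(S)\ge\beta(S)^2/8$ as the paper, but by a genuinely different route. The paper fixes the threshold $t=1/2$, bounds the ``upper area'' $x=\sum_i(d_S(i)-1/2)^+$ by $\tfrac12(1-\cgap)|S|$ via a two-case analysis (Lemma~\ref{l:upper_area_bound}), and then applies Jensen to the two halves $U\le 1/2$ and $U>1/2$ to get $\psi(S)\ge 1-\tfrac12(\sqrt{1+\cgap}+\sqrt{1-\cgap})$. You instead start from the exact variance identity $\psi(S)=\tfrac{1}{2k}\int_0^1(\sqrt{N(U)}-\sqrt k)^2\,dU$, prove the clean deficit identity $\int_0^1(k-N(U))^+\,dU=\beta k$ directly from $\sum_{i\le k}a_i=(1-\beta)k$, and finish with two Cauchy--Schwarz steps. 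Your approach is more self-contained and avoids the case split entirely; the paper's threshold/bar-chart picture, on the other hand, is what unifies Theorem~\ref{t:cgap} with the Lov\'asz--Simonovits bound (Lemma~\ref{l:comb_drop}) and with the vertex-expansion result (Theorem~\ref{t:vertex}), where one varies the threshold $t$. So your proof is shorter for this statement in isolation, while the paper's proof is designed to plug into the surrounding machinery.
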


By the aforementioned result of Morris and Peres~\cite{MP}, one immediate corollary is that the mixing time of non-lazy random walks is upper bounded by $O(\log(n)/\cgap^2(G))$.
An implication is that adding self-loops of weight $\phi(G)$ (instead of $1/2$) is enough to guarantee mixing time $O(\log(n)/\phi^2(G))$ in any graph,
which may have applications in speeding up random sampling algorithms.

Our proof of Theorem~\ref{t:cgap} is based on a new analysis of the approach by Lov\'asz and Simonovits~\cite{LS} using the bar chart in Figure~\ref{f:barchart}.
We believe that the new analysis is more intuitive and provides better insights into what combinatorial properties are needed for fast mixing.

Using Theorem~\ref{t:cgap} and the results in~\cite{KL14},
another corollary is the following lower bound on small-set expansion of graph powers.

\begin{corollary} \label{c:power}
For any graph $G$ and any integer $t \geq 1$,
\[\phi_{\delta/4}(G^t) = \Omega(\min\{ \sqrt{t} \cdot \phi_{\delta}(G), 1 \}).\]
\end{corollary}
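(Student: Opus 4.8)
The plan is to prove Corollary~\ref{c:power} by composing three facts. Throughout, for a graph $H$ on $N$ vertices let $\psi_\delta(H):=\min_{|S|\le\delta N}\psi(S)$ and $\cgap_\delta(H):=\min_{|S|=|T|\le\delta N}\big(1-w(S,T)/|S|\big)$ denote the restrictions of the gauge and of the combinatorial gap~\eqref{e:cgap} to sets of density at most $\delta$.

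\emph{Fact 1 (expansion of powers, from~\cite{KL14}).} I would invoke a bound of the form
\[
\phi_{\delta/4}(G^t)\;=\;\Omega\!\left(\min\left\{\sqrt{\,t\cdot\psi_{\delta/2}(G)\,},\;1\right\}\right),
\]
whose intuition is that one step of the $G^t$-walk is $t$ steps of the $G$-walk, so the rate-of-escape quantity $\psi$ that governs convergence (following Morris--Peres) is amplified by a factor of order $t$ when passing to $G^t$, and converting a gauge of order $t\psi$ back into a lower bound on expansion costs a square root (capped at $1$). The set-density constants appearing in~\cite{KL14} may differ from $\delta/4$ and $\delta/2$, but the slack between the $\delta/4$ on the left of Corollary~\ref{c:power} and the $\delta$ on the right is exactly what lets these, and the constant losses below, be absorbed.

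\emph{Fact 2 (Theorem~\ref{t:cgap}, localized).} I would use $\psi_{\delta/2}(G)=\Omega\!\big(\cgap_{\delta/2}(G)^2\big)$. The bar-chart proof of Theorem~\ref{t:cgap} estimates $\psi(S)$ one set at a time, via the combinatorial gap over sets of size comparable to $|S|$, so this small-set form should fall out of the same argument (and otherwise one can apply Theorem~\ref{t:cgap} to the subgraph spanned by the small sets in question). This is precisely where passing from $\phi$ to $\cgap$ matters: the classical replacement is the Morris--Peres bound $\psi(S)=\Omega(\phi(S)^2)$, valid only for lazy walks, and using $\cgap$ in its place is what makes the Corollary hold for an arbitrary graph.

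\emph{Fact 3 (combinatorial gap versus expansion).} Finally $\cgap_{\delta/2}(G)\ge\tfrac12\,\phi_\delta(G)$: for $S,T$ with $|S|=|T|=k\le\delta n/2$, the identity $w(S,V)=|S|=k$ gives $w(S,T)=k-\tfrac12\big(w(S,V-T)+w(T,V-S)\big)$; since $V-T$ and $V-S$ both contain $\overline{S\cup T}$, we get $w(S,V-T)+w(T,V-S)\ge w\big(S\cup T,\overline{S\cup T}\big)\ge\phi_\delta(G)\,|S\cup T|\ge\phi_\delta(G)\,k$, using $|S\cup T|\le 2k\le\delta n$; hence $1-w(S,T)/k\ge\tfrac12\phi_\delta(G)$. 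Chaining the three facts yields $\phi_{\delta/4}(G^t)=\Omega\!\big(\min\{\sqrt t\,\phi_\delta(G),\,1\}\big)$, as claimed.

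The step I expect to be the main obstacle is the interface between Facts 1 and 2. The familiar ``$t$ steps multiply the gauge by $t$'' reasoning is normally coupled with the lazy Cheeger inequality $\psi\gtrsim\phi^2$, which is unavailable here, so one must make sure the gauge estimate genuinely flows through $\cgap$: either by feeding a $\psi$-based statement of~\cite{KL14} into Theorem~\ref{t:cgap}, or by re-running the relevant portion of~\cite{KL14} with its appeal to laziness replaced by Theorem~\ref{t:cgap}. One must also carry the set-density parameters carefully through all three facts so that the clean $\delta/4$-versus-$\delta$ statement survives; Fact 3 is otherwise routine.
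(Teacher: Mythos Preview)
Your overall architecture is right and matches the paper's: reduce to a $\cgap_\delta$ statement, then convert $\cgap_{\delta/2}$ back to $\phi_\delta$ via your Fact~3, which is exactly the paper's Lemma~\ref{l:relation} (your proof of it is slightly different from the paper's but equally valid). Where you diverge is in the choice of tool for removing laziness from~\cite{KL14}. The argument in~\cite{KL14} is entirely Lov\'asz--Simonovits-curve based, and the only place laziness enters is the chord inequality~\eqref{e:chord}. The paper therefore never routes through the gauge $\psi$ or Theorem~\ref{t:cgap} at all: it simply substitutes Lemma~\ref{l:comb_drop} (the $\cgap$-based chord inequality) for~\eqref{e:chord} inside the~\cite{KL14} proof, obtaining $\phi_{\delta/2}(G^t)=\Omega(\min\{\sqrt t\cdot\cgap_\delta(G),1\})$ directly, and then applies Lemma~\ref{l:relation}.

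Your Fact~1, a $\psi$-based bound attributed to~\cite{KL14}, is not something that paper actually states, and your fallback plan of ``re-running~\cite{KL14} with its appeal to laziness replaced by Theorem~\ref{t:cgap}'' is a type mismatch: Theorem~\ref{t:cgap} bounds the evolving-set gauge $\psi$, whereas the object being manipulated in~\cite{KL14} is the Lov\'asz--Simonovits curve. The correct drop-in replacement is Lemma~\ref{l:comb_drop}, not Theorem~\ref{t:cgap}. Once you make that swap, your Fact~2 becomes unnecessary and the argument collapses to two steps (Lemma~\ref{l:comb_drop} inside~\cite{KL14}, then Lemma~\ref{l:relation}), which is precisely the paper's proof. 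So there is no real gap in your plan, but you are carrying an extra layer ($\psi$) that the paper avoids, and the place you flagged as the ``main obstacle'' is resolved by using the curve lemma rather than the gauge theorem.
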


The same result is proved in~\cite{KL14} for lazy graphs, and here we prove it for all graphs.
Note that it is not true that $\phi(G^t) = \Omega(\min\{\sqrt{t} \cdot \phi(G), 1\})$ when $G$ is bipartite, but the above corollary shows that it is true for small-set expansion even when $G$ is bipartite.
As shown in~\cite{RS14}, this result can be used to amplify hardness results for the small-set expansion problem.

\subsubsection*{Vertex Expansion}

The robust vertex expansion is defined by Kannan, Lov\'asz and Montenegro~\cite{KLM} as follows:
For $S \subseteq V$, let $N_{1/2}(S) := \min\{ |T|~|~ T \subseteq V-S {\rm~and~} w(S,T) \geq \frac12 w(S,V-S)\}$.
Define 
\[\ver(S) := \min \left\{ \frac{N_{1/2}(S)}{|S|}, 1 \right\} \quad {\rm and} \quad
\ver(G) := \min_{S:|S| \leq n/2} \ver(S)\]
as the robust vertex expansion of a set $S$ and the graph $G$.
This definition slightly differ from the original definition in~\cite{KLM} by bounding the vertex expansion above by one.
We do this because it is the range of interest and the statement of our result would be much cleaner.
Also define
\[\Psi(S) := \phi(S) \cdot \ver(S) \quad {\rm and}
\quad \Psi(G) := \min_{S:|S| \leq n/2} \Psi(S)\]
as the minimum product of the edge expansion and the robust vertex expansion.
It is proved in~\cite{KLL} that 
\[1-\lambda_2 = \Omega(\Psi(G)),\]
and that the spectral partitioning algorithm and the local graph partitioning algorithm using personal pagerank vectors~\cite{ACL} achieve better approximation when the robust vertex expansion is large.
We prove a similar result for random walks and evolving sets.

\begin{theorem} \label{t:vertex}
For lazy graphs $G$, $\psi(G) \geq \Omega(\Psi(G))$.
\end{theorem}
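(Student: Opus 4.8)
The plan is to prove the stronger per-set bound $\psi(S) = \Omega\bigl(\phi(S)\cdot\ver(S)\bigr)$ for every $S$ with $|S|\le n/2$; taking the minimum over such $S$ then gives $\psi(G) = \Omega(\Psi(G))$. The starting point is to rewrite the gauge as a second moment. Since $G$ is regular of degree $1$, one has $\E[|\tilde S_U|] = \sum_y w(y,S) = |S|$, where $\tilde S_U := \{y : w(y,S)\ge U\}$ and $U$ is uniform on $[0,1]$; hence, setting $Z_U := \sqrt{|\tilde S_U|/|S|}$, we get $\E[Z_U^2]=1$ and
\[
\psi(S) \;=\; 1-\E[Z_U] \;=\; \tfrac12\,\E\bigl[(Z_U-1)^2\bigr] \;=\; \frac{1}{2|S|}\,\E\Bigl[\bigl(\sqrt{|\tilde S_U|}-\sqrt{|S|}\bigr)^2\Bigr].
\]
Laziness is used next: for $y\in S$ we have $w(y,S)\ge w(y,y)\ge\tfrac12$, so $S\subseteq\tilde S_U$ whenever $U\le\tfrac12$, and for $y\notin S$ we have $w(y,S)\le 1-w(y,y)\le\tfrac12$. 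Therefore on $U\in[0,\tfrac12]$ we may write $|\tilde S_U| = |S| + R_U$ with $R_U := |\{y\notin S: w(y,S)\ge U\}|\ge 0$ non-increasing in $U$, and it suffices to lower bound $\int_0^{1/2}\bigl(\sqrt{|S|+R_U}-\sqrt{|S|}\bigr)^2\,dU$.

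The main new idea is to truncate this integral at a level governed by robust vertex expansion. Sort the vertices outside $S$ so that $w(y_1,S)\ge w(y_2,S)\ge\cdots$, set $k := N_{1/2}(S)-1$, and let $L := w(y_k,S)$. Two facts fall out of the definition of $N_{1/2}$. First, $R_U\ge k$ for every $U\in[0,L]$, since the $k$ heaviest outside vertices all have $w(\cdot,S)\ge L\ge U$. Second, evaluating $\int_0^L R_U\,dU = \sum_i\min(w(y_i,S),L) = kL + \sum_{i>k}w(y_i,S)$ and using that no $k$ vertices can capture half of $w(S,V-S)$ (because $k<N_{1/2}(S)$), we obtain $\int_0^L R_U\,dU > \tfrac12 w(S,V-S) = \tfrac12\phi(S)|S|$. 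Finally $L\le w(y_1,S)\le\tfrac12$, so the truncated range stays inside $[0,\tfrac12]$.

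To conclude, combine the elementary inequality $\bigl(\sqrt{a+R}-\sqrt a\bigr)^2\ge \frac{R^2}{4(a+R)}$ with the fact that $R\mapsto \frac{R^2}{|S|+R}$ is \emph{convex} on $R\ge 0$. Jensen's inequality, applied to the uniform measure on $[0,L]$, gives
\[
\int_0^L \frac{R_U^2}{|S|+R_U}\,dU \;\ge\; L\cdot\frac{\bar R^2}{|S|+\bar R}, \qquad \bar R := \frac1L\int_0^L R_U\,dU,
\]
where $\bar R\ge k$ and $L\bar R > \tfrac12\phi(S)|S|$. Since $k\ge\tfrac12\ver(S)|S|$ whenever $N_{1/2}(S)\ge 2$, a two-line case split according to whether $\bar R\ge|S|$ or $\bar R<|S|$ (using $\ver(S)\le 1$) shows the right-hand side is $\Omega\bigl(\phi(S)\ver(S)|S|\bigr)$, and chaining everything back yields $\psi(S) = \Omega\bigl(\phi(S)\ver(S)\bigr)$.

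I expect the main obstacle to be precisely the regime where $S$ has many neighbours, so that $R_U$ is far larger than $|S|$ for small $U$: there, the ``first-order terms cancel'' bookkeeping that recovers the classical $\psi(S)=\Omega(\phi(S)^2)$ bound carries no information, and one cannot linearize $\sqrt{|S|+R_U}$ around $\sqrt{|S|}$. Truncating at $L = w(y_{N_{1/2}(S)-1},S)$ and exploiting the convexity of $R^2/(|S|+R)$ rather than the concavity of the square root is what makes the estimate uniform over all ranges of $R_U$, and the step verifying that the truncated mass $\int_0^L R_U\,dU$ still sees a constant fraction of $w(S,V-S)$ is exactly where robust vertex expansion enters. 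The leftover work — chasing absolute constants and the degenerate case $N_{1/2}(S)=1$ (which forces $w(S,V-S)\le 1$ and is handled by running the same argument with $y_1$ in place of $y_k$) — is routine.
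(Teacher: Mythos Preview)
Your argument is correct and genuinely different from the paper's. The paper works inside the Lov\'asz--Simonovits ``bar chart'' framework: it first converts robust vertex expansion into the area bound $C(d_S,(1+\ver(S))|S|)\le (1-\phi(S)/2)|S|$ (Lemma~\ref{l:vertex-profile}), then picks a carefully chosen threshold $t=(b-b^2)/(a-b^2)$, bounds the upper area above $t$, and finishes by conditioning on $U\le t$ versus $U>t$ together with concavity of $\sqrt{\cdot}$. Your route avoids the curve machinery entirely: you use the exact second-moment identity $\psi(S)=\tfrac12\E[(Z_U-1)^2]$, exploit laziness to write $|\tilde S_U|=|S|+R_U$ on $U\le 1/2$, lower bound $(\sqrt{|S|+R}-\sqrt{|S|})^2$ by $R^2/(4(|S|+R))$, truncate at $L=w(y_k,S)$, and apply Jensen to the convex function $R\mapsto R^2/(|S|+R)$. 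The two proofs share the same use of laziness (to separate $S$ from its complement at level $1/2$) and extract the same combinatorial content from $N_{1/2}(S)$, but the analytic packaging is different: the paper balances two concave pieces around a threshold, whereas you exploit convexity after a pointwise lower bound. Your argument is shorter and more self-contained for the evolving-set statement; the paper's approach, in exchange, slots into the same template as Lemma~\ref{l:comb_drop} and simultaneously yields the random-walk curve inequality of Lemma~\ref{l:lovasz}, which your proof does not directly give. (Your constant is a bit worse---roughly $1/64$ versus $1/18$---but that is immaterial for the $\Omega(\cdot)$ claim.) One small remark on the degenerate case $N_{1/2}(S)=1$: with $L=w(y_1,S)$ you actually get $\int_0^L R_U\,dU = w(S,V{-}S)$ exactly, so the bound is cleaner than your phrasing suggests.
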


A corollary is an improved analysis of Theorem~\ref{t:approx} when the robust vertex expansion of $G$ is large.

\begin{corollary} \label{c:approx}
For any target set $S^*$ and any $\eps > 0$, there is a subset $S'$ with $|S'| \geq |S^*|/2$, such that if we start the evolving set process with $\{v\}$ for $v \in S'$,
then with constant probability the algorithm returns a set $S$ with $\Psi(S) = O(\phi(S^*)/\eps)$ and $|S| = O(|S^*|^{1+\eps})$, and the running time is $O(|S^*|^{1+2\eps} \phi(S)^{-\frac12} \log^2 n)$.
\end{corollary}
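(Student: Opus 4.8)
The plan is to reuse the analysis of Theorem~\ref{t:approx} from \cite{AP,OT} essentially verbatim, replacing a single invocation of the Cheeger-type inequality for evolving sets by Theorem~\ref{t:vertex}. Recall that the bound $\psi(S) = \Omega(\phi(S)^2)$ enters the analysis of \cite{AP,OT} in exactly one place: the evolving set process is intrinsically governed by the gauge, since $\psi(S)$ is the quantity controlling the expected shrinkage $\E[\sqrt{|\tilde{S}|/|S|}] = 1 - \psi(S)$ of the potential along a sample path. Consequently, what their argument actually establishes is that for a $\frac12$-fraction of the vertices $v \in S^*$ (forming $S'$), running the evolving set process from $\{v\}$ yields, with constant probability, an output set $S$ satisfying both (i) $|S| = O(|S^*|^{1+\eps})$ and (ii) a gauge bound $\psi(S) = O(\phi(S^*)/\eps)$; the edge-expansion guarantee $\phi(S) = O(\sqrt{\phi(S^*)/\eps})$ of Theorem~\ref{t:approx} follows from (ii) purely by substituting $\psi(S) = \Omega(\phi(S)^2)$, while (i) and the running-time bound $O(|S^*|^{1+2\eps}\phi(S)^{-1/2}\log^2 n)$ are derived without any Cheeger-type inequality.

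First I would isolate from \cite{AP,OT} the gauge bound (ii) as a standalone statement, keeping $\psi$ symbolic through their lemmas until the very end rather than converting to $\phi$ early. Then I would apply Theorem~\ref{t:vertex}, in the setwise form $\psi(S) = \Omega(\Psi(S)) = \Omega(\phi(S)\cdot\ver(S))$, to the output set $S$: this is legitimate since $G$ is lazy (as required by both Theorem~\ref{t:vertex} and the evolving set algorithm) and $|S| = O(|S^*|^{1+\eps}) \leq n/2$ in the small-set regime, so $S$ lies in the range over which Theorem~\ref{t:vertex} applies. Combining with (ii) gives $\Psi(S) = O(\psi(S)) = O(\phi(S^*)/\eps)$, which is exactly the claimed bound, and the size and running-time guarantees carry over unchanged from Theorem~\ref{t:approx}. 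As a consistency check against Theorem~\ref{t:approx}: squaring its conclusion gives $\phi(S)^2 = O(\phi(S^*)/\eps)$, and indeed both $\phi(S)^2$ and $\Psi(S)$ are $O(\psi(S))$.

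The main obstacle is the bookkeeping in the first step: confirming that the analysis of \cite{AP,OT} really does factor through a bound on $\psi(S)$ that does not itself rely on $\psi = \Omega(\phi^2)$, so that the substitution is sound; if their exposition phrases the intermediate guarantees directly in terms of conductance, one has to re-run the affected lemmas with $\psi$ kept symbolic. A lighter, secondary obstacle is upgrading Theorem~\ref{t:vertex} from the graph-level minimum to the setwise inequality $\psi(S) = \Omega(\Psi(S))$ for every $S$ with $|S| \leq n/2$; this should follow immediately from the proof of Theorem~\ref{t:vertex}, since gauge lower bounds of Morris--Peres type are proved set by set rather than only at the minimizer.
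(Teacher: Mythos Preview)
Your proposal is correct and matches the paper's approach essentially verbatim: the paper notes that Lemma~5.2 of \cite{OT} already yields the gauge bound $\psi(S) = O(\phi(S^*)/\eps)$ for the output set, and then simply substitutes the setwise inequality $\psi(S) = \Omega(\Psi(S))$ from the proof of Theorem~\ref{t:vertex} in place of $\psi(S) = \Omega(\phi(S)^2)$. Both of your anticipated obstacles are non-issues exactly as you guessed---\cite{OT} states the intermediate bound in terms of $\psi$, and the proof of Theorem~\ref{t:vertex} is carried out set by set.
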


Note that the conclusion $\Psi(S) = O(\phi(S^*)/\eps)$ implies that $\phi(S) = O(\phi(S^*)/\phi^V(S)) = O(\phi(S^*)/\phi^V(G))$.
In particular, this implies that the evolving set algorithm is a constant factor approximation algorithm when $\phi^V(G)$ is a constant, for example when $G$ is a planted random graph.
This shows that the evolving set algorithm matches the improved analysis of the spectral partitioning algorithm in~\cite{KLL}.
We refer the reader to~\cite{KLL} for more discussions and motivations for robust vertex expansion.

\subsubsection*{Limitations}

The subexponential time algorithm for small-set expansion by Arora, Barak and Steurer~\cite{ABS} uses eigenspace enumeration and random walks.
The short code example in~\cite{BGHMRS} shows the limitation of the eigenspace enumeration method.
It is a natural question to ask whether random walks can be used to disprove the small-set expansion hypothesis.
There were very few results showing the limitations of these random walks based algorithms (see~\cite{ZLM} for the only such result that we know of).
One main difference between the truncated random walk algorithm by Spielman and Teng~\cite{ST} and the evolving set algorithm by Andersen and Peres~\cite{AP} and Oveis Gharan and Trevisan~\cite{OT} is that the random walk algorithm is deterministic while the evolving set algorithm involves much randomness.
Oveis Gharan~\cite{O} conjectured in his thesis (Conjecture 12.3.4) that there is a small but nontrivial probability that all the sets explored by the evolving set process is of size $O(|S^*|)$, and argued that this would disprove the small-set expansion hypothesis.
We present an example for which the evolving set algorithm fails with
probability one, refuting Oveis Gharan's conjecture.

\begin{theorem} \label{t:hypercube}
Given any $\eps$, there exists $\delta_\eps$ such that for any $\delta > 0$,
there is a graph $G$ such that $\phi_\delta(G) \leq \eps$, but any subset of
volume $\leq \delta_\eps n$ returned by the evolving set algorithm
in~\cite{AP,OT} has expansion at least $1-\eps$ with probability one.
\end{theorem}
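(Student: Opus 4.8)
The plan is to take $G$ to be a bipartite graph that still carries a sparse small cut — concretely the $d$-dimensional hypercube — and to exploit bipartiteness directly: on a bipartite graph the (non-lazy) evolving set process started from a single vertex can never cross to the other side of the bipartition, so every set it produces is an independent set and hence has expansion exactly $1$, no matter how large it is.

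Here is the construction together with the two things to verify. Given $\eps$ and $\delta$, put $k := \lceil \log_2(1/\delta) \rceil$ and $d := \lceil k/\eps \rceil$, and let $G = \{0,1\}^d$ with the uniform (non-lazy) random walk, so $n = 2^d$. For the small-set expansion bound I would take the subcube obtained by fixing $k$ coordinates: it has $2^{d-k} \le \delta n$ vertices, and each of its vertices has exactly $k$ of its $d$ edges leaving it, so its expansion is $k/d \le \eps$; hence $\phi_\delta(G) \le \eps$. For $\delta_\eps$ one can take any fixed constant, say $\delta_\eps = \tfrac12$; in fact the volume restriction will play no role in the argument.

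The core step is the structural claim that, with probability one, every set $S_t$ arising in the evolving set process from $\{v\}$ lies entirely in one of the two classes $A,B$ of the bipartition of $\{0,1\}^d$ by parity of Hamming weight. This is a short induction: if $S \subseteq A$, then for every $y \in A$ we have $N(y) \cap S \subseteq N(y) \cap A = \emptyset$, so $w(y,S) = 0$, whence $\tilde S = \{y : w(y,S) \ge U\} \subseteq B$ for every threshold $U > 0$; symmetrically a set in $B$ maps into $A$. Since $S_0 = \{v\}$ lies in one class and the exceptional draw $U = 0$ has probability zero (and is anyway excluded by the conditioning on $S_t \notin \{\emptyset, V\}$), every $S_t$ is an independent set, so $w(S_t, S_t) = 0$ and $\phi(S_t) = 1 \ge 1-\eps$. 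Since the algorithm of~\cite{AP,OT} starts the process from single vertices and only ever returns one of the sets $S_t$, this already gives the theorem — indeed with certainty, not merely with probability one.

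The step I expect to need the most care is reconciling this combinatorics with the precise algorithm of~\cite{AP,OT} invoked in Oveis Gharan's plan. First one must confirm that the process there is the non-lazy evolving set process: in a lazy graph every set has expansion at most $\tfrac12$ (the self-loops alone contribute $\tfrac12 |S|$ to $w(S,S)$), so the statement is only meaningful in the non-lazy setting. Second, one should check that the implementation details — the volume-biased transition rule, the conditioning on $S_t \notin \{\emptyset,V\}$, and any truncation of negligible walk probabilities — leave the ``stays in one class'' invariant intact. Allowing arbitrary start vertices and arbitrary internal randomness does not help here, since the invariant holds for all $v$ and all $U$. If the intended model instead demanded a connected \emph{non-bipartite} instance, I would run the same idea on a graph that is bipartite apart from a negligible, far-away gadget; then the conclusion would hold only with high probability, and bounding the chance that the walk reaches the gadget would become the main technical burden.
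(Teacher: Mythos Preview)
Your bipartite trick is clever, but it has a genuine gap that breaks the argument.

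The algorithm in~\cite{AP,OT} runs the evolving set process on the \emph{lazy} walk; this is where the Morris--Peres bound $\psi(G) \geq \Omega(\phi(G)^2)$ (and hence the whole performance analysis) comes from. Your justification for reading the process as non-lazy --- that in a lazy graph every set has expansion at most $\tfrac12$, so the statement would be vacuous --- conflates two different graphs. The algorithm may lazify internally, but the expansion of the returned set $S$ is measured in the input graph $G$, where $\phi(S)$ can perfectly well be close to $1$. So the theorem is meaningful in the lazy model, and that is the model you must defeat.

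Once you lazify the Boolean hypercube, the ``stays on one side'' invariant evaporates immediately: every $v \in S$ has $w(v,S) \geq \tfrac12$ from its self-loop. What survives is only the symmetry: the process from $\{0^d\}$ still explores Hamming balls $B(r)$. But on the \emph{ordinary} hypercube these balls are not expanding. A ball $B(r)$ of volume $\Theta(n)$ has $r = d/2 - \Theta(\sqrt d)$, and a direct count gives
\[
\phi(B(r)) \;=\; \frac{(d-r)\binom{d}{r}}{d\sum_{i\le r}\binom{d}{i}} \;=\; \Theta\!\left(\frac{1}{\sqrt d}\right) \to 0,
\]
so the (lazy) evolving set process on your graph \emph{does} output small sets of expansion $o(1)$, and the conclusion of the theorem fails. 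Your proposed patch --- bipartite plus a far-away non-bipartite gadget --- does not help either, since lazification adds a self-loop at every vertex and kills the parity argument globally, not just near the gadget.

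This is exactly the distinction the paper's construction captures. The paper takes the $k$-ary $\eps$-\emph{noisy} hypercube, which is not bipartite, and proves two things that your example cannot deliver simultaneously: (i) a monotonicity lemma showing the process still only explores Hamming balls (this survives lazification, since $\1[x\in B(r)]$ is itself monotone in $|x|$), and (ii) via Gaussian hypercontractivity and the CLT, that every Hamming ball of volume $\le \delta_\eps n$ has expansion $\geq 1-\eps$ in $G$. The noise is what makes the balls genuinely expanding; on the noiseless Boolean cube they simply are not. So the missing idea is not a technicality about which variant of the process is run --- it is that one needs a graph where the symmetric sets the process is forced to visit are themselves small-set expanders, and producing such a graph is the actual content of the proof.
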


The example is a $k$-ary $\eps$-noisy hypercube, where the dimension cuts are of size $n/k$ with expansion $\eps$.
We show that, however, the evolving set algorithm will only explore the Hamming balls, and the expansion is at least $1-\eps$ for all Hamming balls of size $O(n/k)$.
We note that this example also shows that the random walk algorithm~\cite{ST,KL12} and the pagerank algorithm~\cite{ACL,ZLM} fail to disprove the small-set expansion hypothesis; see Section~\ref{s:limitations}.

We believe that this example exposes the limitations of all known local graph partitioning algorithms, and can be used as a basis to prove further lower bounds (e.g.~to show that the analysis of the $O(\sqrt{\phi(S) \log(|S|)})$-approximation of the evolving set algorithm in Theorem~\ref{t:approx} is tight when $\eps = 1/\log(|S|)$).

\subsection{Relations with Previous Work}

\subsubsection*{Combinatorial Analog of Spectral Gap}

We note that the original analyses of Lov\'asz and Simonovits~\cite{LS} and Andersen and Peres~\cite{AP} heavily rely on the laziness assumption and cannot be used to work with the combinatorial gap.
The bar chart in Figure~\ref{f:barchart} is the new element introduced to analyze the combinatorial gap as well as the robust vertex expansion.

Trevisan~\cite{T} defined the bipartiteness ratio $\beta(G)$ of a graph and proved that $\lambda_n+1$ is related to $\beta(G)$ as if $1-\lambda_2$ is related to $\phi(G)$ stated by Cheeger's inequality.
After formulated and proved Theorem~\ref{t:cgap},
we observe that Trevisan's result combined with the spectral argument can also be used to derive the corollary that the mixing time of non-lazy random walks is bounded by $O(\log(n)/\cgap(G)^2)$.
However, we remark that Theorem~\ref{t:cgap} and Corollary~\ref{c:power} cannot be derived from Trevisan's result and the spectral approach, and also that the formulation of Theorem~\ref{t:cgap} is new.

Bilu and Linial~\cite{BL} defined a combinatorial property called ``jumbleness'', proved that it is a $\log(d)$-approximation to the spectral gap where $d$ is the maximum degree of the graph, and used it to establish a converse to the expander mixing lemma.
The definition of the jumbleness is similar to our definition of the combinatorial gap in that it also concerns about $w(S,T)$ for two subsets of vertices $S,T \subseteq V$, but the precise definition and the theorem obtained are incomparable to what we have in this paper.

\subsubsection*{Mixing Time and Local Graph Partitioning}

The results in Lov\'asz and Kannan~\cite{LK} and Kannan, Lov\'asz and Montenegro~\cite{KLM} show that the mixing time of lazy random walks is $O(\log(n)/\Psi(G))$, among other conditions that imply faster mixing.
However, their results cannot be applied to analyze local graph partitioning algorithms as in Corollary~\ref{c:approx}.

Besides the random walk algorithm~\cite{ST,KL12} and the evolving set algorithm~\cite{AP,OT}, there is also a local graph partitioning algorithm using pagerank vectors~\cite{ACL,ZLM}.
In terms of the approximation guarantee, the output size, and the running time, the pagerank algorithm is subsumed by the evolving set algorithm in~\cite{AP,OT}.

In~\cite{KLL}, it was shown that the pagerank algorithm performs better when the robust vertex expansion is large. 
Similar results were not known for random walks and evolving sets, as the spectral techniques in~\cite{KLL} are not applicable.
These results are proved in this paper by a new analysis of the combinatorial approach of Lov\'asz and Simonovits~\cite{LS}.
Finally, we remark that this paper is a subsequent work of~\cite{KLL}, and both the results and the techniques are different from~\cite{KLL}, especially the combinatorial analog of spectral gap, the counterexample for the evolving set algorithm, and the new analysis of Lov\'asz and Simonovits approach using the barchart.

\section{Faster Convergence}

In this section, we prove the positive results about faster convergence rates of random walks and evolving sets.
Our proofs are based on the combinatorial method of Lov\'asz and Simonovits~\cite{LS}, and we will begin with an introduction of their techniques in Section~\ref{s:curve}, and then we will discuss the proof outline and highlight the new idea in Section~\ref{s:outline}.
Then, we will prove Theorem~\ref{t:cgap} about combinatorial analog of spectral gap in Section~\ref{s:cgap} and then prove Corollary~\ref{c:power} about small-set expansion of graph powers.
Then, we will prove Theorem~\ref{t:vertex} about robust vertex expansion in Section~\ref{s:vertex} and then show its application in local graph partitioning.
Finally, we will discuss how to extend the results to non-regular graphs in Section~\ref{s:general}.

\subsection{Lov\'asz-Simonovits Curve} \label{s:curve}

For any vector $p \in \mathbb R^n$, Lov\'asz and Simonovits~\cite{LS} study the curve $C(p) : [0, n] \to \mathbb R$ that plots the cumulative sum of $p$ defined as
\begin{equation} \label{e:curve}
C(p,x)
= \max_{c \in [0, 1]^n : \sum_i c(i) = x} \sum_{i \in V} c(i) \cdot p(i).
\end{equation}
In words, $C(p,x)$ is just the sum of the first $x$ largest elements in $p$ when $x$ is a positive integer, and the curve $C(p,x)$ is defined for all $x \in [0,n]$ by piecewise linear extension.
It is clear from the definition that $C(p)$ is a concave function.
We are interested in studying the curve $C(A^t p, x)$ where $A$ is a random walk matrix and $p$ is a probability distribution.
Notice that as $A^t p$ converges to the uniform distribution as $t$ becomes larger, $C(A^t p)$ converges to the line $x/n$ and vice versa.
In~\cite{LS}, their method to bound the mixing time is to bound the difference between $C(A^t p)$ and the line $x/n$.
When $A$ is the lazy random walk matrix,
they proved that
\begin{equation} \label{e:converge}
C(A^t p, x) \leq \frac{x}{n} + \sqrt{x} (1 - \frac{\phi(G)^2}{8})^t,
\end{equation}
and this implies that the mixing time of lazy random walks is $O(\log n / \phi(G)^2)$.
The key lemma in their proof is the following inequality:
For any lazy random walk matrix $A$, any $p \in \R^n$ and any integral $x$,
\begin{equation} \label{e:chord}
C(Ap, x) \leq \frac12 \big(C(p, x(1 - \phi(G))) + C(p, x(1 + \phi(G))) \big).
\end{equation}
The bound in (\ref{e:converge}) follows from an inductive argument using (\ref{e:chord}); see~\cite{LS,ST,KL12} and also a slightly more general version in Lemma~\ref{l:lovasz} in Section~\ref{s:vertex}. 
We remark that their proof of (\ref{e:chord}) crucially relies on the assumption that there is a self-loop of weight $1/2$ on each vertex and is a bit magical.

\subsection{Proof Outline} \label{s:outline}

We mainly outline the proof of Theorem~\ref{t:cgap} in this subsection, but we will briefly mention the modifications to prove Theorem~\ref{t:vertex} at the end.
We will prove the following inequality similar to (\ref{e:chord}) using the combinatorial gap (without the laziness assumption that $A_{ii} \geq 1/2$ for all $i \in V$).

\begin{lemma}
\label{l:comb_drop}
For any random walk matrix $A$, any $p \in \R^n$ and any integral $x \leq n / 2$,
\[
C(Ap, x)
\leq \frac12 \big(C(p, x(1 - \cgap(G))) + C(p, x(1 + \cgap(G)))\big).
\]
\end{lemma}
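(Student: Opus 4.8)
The plan is to follow the chord-inequality strategy of Lov\'asz--Simonovits, but to replace the step where they exploit the laziness assumption $A_{ii}\ge 1/2$ by a purely combinatorial counting argument driven by $\cgap(G)$.

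First I would reduce the lemma to a decomposition statement. Fix an integral $x\le n/2$ and let $S$ be the set of the $x$ coordinates on which $Ap$ is largest, so $C(Ap,x)=\sum_{i\in S}(Ap)(i)$. Writing $f(j):=w(S,j)=\sum_{i\in S}w(i,j)$ and using that $A$ is symmetric with unit row sums, one gets $C(Ap,x)=\sum_{j}f(j)\,p(j)$, where $f\in[0,1]^n$ and $\sum_j f(j)=|S|=x$. It therefore suffices to produce $c_1,c_2\in[0,1]^n$ with $\sum_j c_1(j)=x(1-\cgap(G))$, $\sum_j c_2(j)=x(1+\cgap(G))$, and $f=\tfrac12(c_1+c_2)$ pointwise: then $c_1$ and $c_2$ are feasible in the definitions of $C(p,x(1-\cgap(G)))$ and $C(p,x(1+\cgap(G)))$ respectively, and the claimed inequality follows immediately.

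For the decomposition I would use the threshold split: any $c_2$ with $\max(0,2f(j)-1)\le c_2(j)\le\min(1,2f(j))$, paired with $c_1:=2f-c_2$, lies in $[0,1]^n$ and satisfies $f=\tfrac12(c_1+c_2)$; moreover $\sum_j\max(0,2f(j)-1)=x-\sum_j\min(f(j),1-f(j))$ and $\sum_j\min(1,2f(j))=x+\sum_j\min(f(j),1-f(j))$, so by sliding $c_2$ along this interval one attains the target sum $x(1+\cgap(G))$ (and hence $\sum_j c_1(j)=x(1-\cgap(G))$) precisely when $\sum_j\min(f(j),1-f(j))\ge x\cdot\cgap(G)$. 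This inequality is the crux. To prove it I would sort the values $f_1\ge\cdots\ge f_n$ of $f$, let $r$ be the number of them exceeding $1/2$, so that $\sum_j\min(f(j),1-f(j))=\sum_{j\le r}(1-f_j)+\sum_{j>r}f_j$, and invoke the definition of $\cgap(G)$ with $S$ and the set $T$ of the $x$ largest coordinates of $f$ (both of size $x\le n/2$) to obtain $\sum_{j\le x}f_j=w(S,T)\le x(1-\cgap(G))$, i.e.\ $x-\sum_{j\le x}f_j\ge x\cgap(G)$. If $r\ge x$ then $\sum_{j\le r}(1-f_j)\ge\sum_{j\le x}(1-f_j)=x-\sum_{j\le x}f_j\ge x\cgap(G)$; if $r<x$ then $\sum_{j>r}f_j\ge\sum_{j>x}f_j=x-\sum_{j\le x}f_j\ge x\cgap(G)$; either way the bound holds, using $0\le f_j\le 1$.

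The main obstacle is exactly this key inequality. Unlike the lazy case, we cannot assume $f(j)\ge 1/2$ for $j\in S$, so the clean Lov\'asz--Simonovits accounting is unavailable; and since $\cgap(G)$ only controls $w(S',T')$ for \emph{equal-sized} $S',T'$, the sole instance of it available for our fixed $S$ is the one with $|T|=x$. The case split on $r$ versus $x$ is what makes this single instance suffice in both regimes. Everything else — the reduction, the threshold split, the feasibility checks, and the interpolation giving the exact sums — is routine, and the passage from this chord inequality to the convergence bound is already handled by the general induction of Lemma~\ref{l:lovasz}.
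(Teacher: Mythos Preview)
Your proof is correct and shares the paper's core strategy: fix the maximizing set $S$, study $f(j)=d_S(j)=w(S,j)$, split at threshold $1/2$, and invoke $\cgap(G)$ exactly once on the pair $(S,T)$ with $T$ the top-$|S|$ coordinates of $f$, case-splitting on whether the count $r$ of values above $1/2$ exceeds $|S|$. The paper realises this via Lemmas~\ref{l:drop_by_upper_area} and~\ref{l:upper_area_bound}: a telescoping-sum/Jensen argument on the concave curve $C(p,\cdot)$ yields $(Ap)(S)\le\tfrac12\bigl(C(p,2x)+C(p,2(|S|-x))\bigr)$ for the upper area $x=\sum_j\max(f_j-\tfrac12,0)$, then $x\le\tfrac12(1-\cgap)|S|$ is proved, and a final concavity step gives the lemma. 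You instead construct feasible vectors $c_1,c_2\in[0,1]^n$ with $f=\tfrac12(c_1+c_2)$ directly, which replaces Jensen-plus-concavity by a single appeal to the definition of $C(p,\cdot)$. Your key inequality $\sum_j\min(f_j,1-f_j)\ge|S|\,\cgap(G)$ is exactly equivalent to the paper's upper-area bound (the upper area equals $\tfrac12\bigl(|S|-\sum_j\min(f_j,1-f_j)\bigr)$), but your Case~2 ($r<|S|$) is noticeably slicker: you get it in one line from $\sum_{j>r}f_j\ge\sum_{j>|S|}f_j=|S|-w(S,T)\ge|S|\,\cgap(G)$, whereas the paper passes through the pointwise estimate $d_S(i)-\tfrac12\le\tfrac{1-\cgap}{2}\,d_S(i)$ for $i>|S|$ and several lines of algebra. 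Both routes give the same lemma with the same constants.
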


With Lemma~\ref{l:comb_drop} in place of~(\ref{e:chord}),
the same inductive argument that we mentioned before implies the convergence result in~(\ref{e:converge}) with $\phi(G)$ replaced by $\cgap(G)$.
It turns out that the analysis of the Lov\'asz-Simonovits curve can be used to analyze the evolving set process,
and the arguments in Lemma~\ref{l:comb_drop} can be adapted to prove Theorem~\ref{t:cgap}.
To prove Lemma~\ref{l:comb_drop}, 
we consider an arbitrary $S \subseteq V$ and try to bound the total probability in $S$ after one step of random walk $(Ap)(S) := \sum_{i \in S} (Ap)_i$.
To bound $(Ap)(S)$, we look at where the probability in $S$ is coming from.
For each $i \in V$, let 
\[d_S(i) := w(i,S)\] 
be the total weight coming from $i$ to $S$.
Recall that we assume the weighted degree of each vertex is one.
So, we have $d_S(i) \in [0, 1]$ for any $i$ and
\begin{equation} \label{e:sum}
\sum_{i \in V} d_S(i) = |S|.
\end{equation}
The reason of this definition is that $(Ap)(S) = \sum_{i=1}^n d_S(i) \cdot p(i)$.
We sort the vertices so that $d_S(1) \geq d_S(2) \geq \dots \geq d_S(n)$.
Let $T := \{1,2,\ldots,|S|\}$ be the $|S|$ vertices with largest $d_S$ values; 
note that $T$ is in general not equal to $S$.
See Figure~\ref{f:barchart} for an illustration of the proof setup.

\begin{figure} \label{f:barchart}
\begin{center}
\includegraphics[scale=.55]{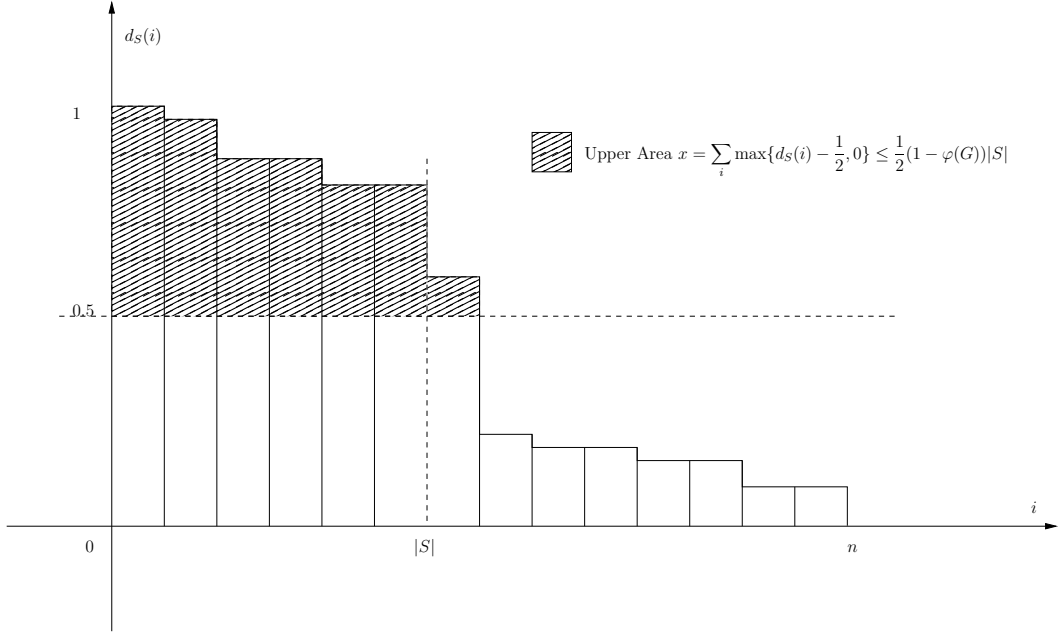} 
\caption{An illustration of the proof ideas.}
\label{fig}
\end{center}
\end{figure}

The obstruction for mixing is when $d_S(1) = d_S(2) = \ldots = d_S(|S|) = 1$
and $p(1), \ldots, p(|S|)$ are the highest probability in $p$,
in which case we would have $C(Ap,|S|) \geq (Ap)(S) = \sum_{i \in S} p(i) = C(p,|S|)$ and thus the curve is not dropping after one step of random walk.
This could happen when $S = T$ in which case $S$ is a disconnected component of $G$ (corresponding to $\lambda_2 = 1$), or when $S$ and $T$ form a bipartite component (corresponding to $\lambda_n = -1$), and in these cases the random walk may not mix (depending on the initial distribution).

The combinatorial gap in~(\ref{e:cgap}) is defined precisely to exclude the obstruction.
It states that any subset $T$ with $|T|=|S|$ can only contribute $w(S,T) \leq (1-\cgap(G))|S|$ to $w(S,V)=|S|$, so as to guarantee that the curve would drop, i.e. $C(Ap,|S|) = \max_{S':|S'|=|S|} Ap(S') < C(p,|S|)$.
To prove Lemma~\ref{l:comb_drop},
we look at the bar chart in Figure~\ref{f:barchart} horizontally and consider the telescoping sum
\[
(Ap)(S) 
= \sum_{i=1}^n d_S(i) \cdot p(i) 
= \sum_{i=1}^n (d_S(i) - d_S(i+1)) \sum_{j=1}^i p(j) 
\leq \sum_{i=1}^n (d_S(i) - d_S(i+1))) \cdot C(p,i).
\]
Suppose we put a threshold $1/2$ and consider the upper area $x = \sum_{i=1}^n \max\{d_S(i) - \frac12, 0\}$ and the lower area $y = \sum_{i=1}^n \min\{d_S(i), \frac12\}$; see Figure~\ref{f:barchart}.
By concavity of the curve $C$,
we will prove in Lemma~\ref{l:drop_by_upper_area} that
\[
(Ap)(S)
\leq \frac12 (C(p, 2 x) + C(p, 2 y))
= \frac12 (C(p, 2 x) + C(p, 2 |S| - 2 x)).
\]
The definition of combinatorial gap in~(\ref{e:cgap}) forces $d_S$ to spread out, and we will prove in Lemma~\ref{l:upper_area_bound} that it implies that
\[x \leq \frac{1}{2} (1-\cgap(G))|S|.\]
Combining these two steps gives Lemma~\ref{l:comb_drop}.
Once we prove Lemma~\ref{l:comb_drop}, the same calculations can be used to prove Theorem~\ref{t:cgap} about the gauge in the evolving set process.

The proof of Theorem~\ref{t:vertex} is also based on the idea of using the bar chart in Figure~\ref{f:barchart}.
First, we translate the definition of robust vertex expansion to an upper bound of the area of the largest vertices.
Then, we will choose a different threshold, and then we will modify the inductive argument accordingly to establish a similar bound as in~(\ref{e:converge}).
The proof will work for the gauge in the evolving set process to prove Theorem~\ref{t:vertex}.

\subsection{Combinatorial Analog of Spectral Gap} \label{s:cgap}

As described in the outline, we will prove the following two lemmas.

\begin{lemma}
\label{l:drop_by_upper_area}
For any subset $S \subseteq V$ with $|S| \leq n / 2$, let $x = \sum_{i \in V} \max\{d_S(i) - \frac12, 0\}$, then
\[
(Ap)(S)
\leq \frac12 (C(p, 2 x) + C(p, 2(|S| - x))),
\]
for any random walk matrix $A$ and any vector $p \in \R^n$.
\end{lemma}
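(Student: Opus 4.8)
My plan is to sidestep the sorting and telescoping machinery hinted at in the outline and instead read the inequality directly off the linear-programming description of $C$ in~(\ref{e:curve}). The starting point is the identity $(Ap)(S) = \sum_{i \in V} d_S(i)\, p(i)$ recorded above, together with the two facts from~(\ref{e:sum}): $d_S(i) \in [0,1]$ for every $i$, and $\sum_{i \in V} d_S(i) = |S|$.

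First I would split each value $d_S(i)$ at the threshold $\tfrac12$, writing $d_S(i) = u(i) + \ell(i)$ with $u(i) := \max\{d_S(i) - \tfrac12, 0\}$ and $\ell(i) := \min\{d_S(i), \tfrac12\}$; this identity is immediate by cases on whether $d_S(i) \ge \tfrac12$. Summing over $i$ and using $\sum_i d_S(i) = |S|$ yields $\sum_i u(i) = x$ and $\sum_i \ell(i) = |S| - x$. The key observation is that $d_S(i) \le 1$ forces $u(i) \le \tfrac12$, and trivially $\ell(i) \le \tfrac12$, so the scaled vectors $2u$ and $2\ell$ (coordinatewise) both lie in $[0,1]^n$, with coordinate sums $2x$ and $2(|S|-x)$ respectively. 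Since $|S| \le n/2$ and $0 \le x \le |S|$, both $2x$ and $2(|S|-x)$ lie in $[0,n]$, so $C(p, 2x)$ and $C(p, 2(|S|-x))$ are defined.

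Then the conclusion falls out of~(\ref{e:curve}): for any $c \in [0,1]^n$ with $\sum_i c(i) = z \in [0,n]$ one has $\sum_i c(i) p(i) \le C(p, z)$. Applying this to $c = 2u$ and to $c = 2\ell$ gives
\[
(Ap)(S) = \sum_i d_S(i) p(i) = \tfrac12\Big(\sum_i 2u(i)\, p(i) + \sum_i 2\ell(i)\, p(i)\Big) \le \tfrac12\big(C(p, 2x) + C(p, 2(|S|-x))\big),
\]
which is exactly the claim.

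I do not expect a genuine obstacle here: this lemma is the \emph{easy half} of the drop argument, and the real content is deferred to bounding $x \le \tfrac12(1-\cgap(G))|S|$ in the companion Lemma~\ref{l:upper_area_bound}; the only thing to watch is the range check that the arguments to $C$ stay in $[0,n]$. If one prefers the concavity-based route alluded to in the outline, an alternative is to sort the vertices by $d_S$, write $(Ap)(S) = \int_0^1\big(\sum_{j:\, d_S(j)\ge h} p(j)\big)\,dh \le \int_0^1 C(p, k(h))\,dh$ where $k(h) := |\{i : d_S(i)\ge h\}|$, split the integral at $h=\tfrac12$, and apply Jensen's inequality to the concave curve $C(p,\cdot)$ on each half, using $\int_{1/2}^1 k(h)\,dh = x$ and $\int_0^{1/2} k(h)\,dh = |S|-x$; this reproduces the same bound.
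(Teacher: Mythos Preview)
Your proof is correct and is genuinely different from the paper's argument. The paper sorts the vertices by $d_S$, rewrites $(Ap)(S)$ via Abel summation as $\sum_i (d_S(i)-d_S(i+1))\,C(p,i)$ (up to the harmless bound $\sum_{j\le i} p(j)\le C(p,i)$), splits this convex combination at the index $k$ where $d_S$ crosses $\tfrac12$, and then applies Jensen's inequality to the concave curve $C(p,\cdot)$ on each half; the two weighted averages of the indices are identified with $x$ and $|S|-x$. Your approach bypasses all of this: the pointwise decomposition $d_S = u+\ell$ with $u,\ell\in[0,\tfrac12]^n$ makes $2u$ and $2\ell$ \emph{feasible} for the maximization in~(\ref{e:curve}), so the bound follows from the definition of $C$ alone, with no sorting, no telescoping, and no appeal to concavity. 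This is strictly shorter and arguably more transparent for this lemma; it also extends immediately to a general threshold $t\in(0,1)$ via $u/(1-t)$ and $\ell/t$, giving $(Ap)(S)\le (1-t)\,C\!\big(p,\tfrac{x}{1-t}\big)+t\,C\!\big(p,\tfrac{|S|-x}{t}\big)$, which is exactly the inequality the paper later rederives in Lemma~\ref{l:comb_drop_vertex}. The integral/Jensen alternative you sketch at the end is the continuous rephrasing of the paper's telescoping argument, so you have in fact covered both routes.
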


\begin{lemma}
\label{l:upper_area_bound}
For any subset $S \subseteq V$ with $|S| \leq n / 2$, let $x = \sum_{i \in V} \max\{d_S(i) - \frac12, 0\}$, then
\[
x \leq \frac12 (1 - \cgap(G))|S|.
\]
\end{lemma}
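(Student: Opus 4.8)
The plan is to reduce the statement to a single consequence of the combinatorial gap. Since $d_S$ is sorted in nonincreasing order and $|S|\le n/2$, the set $T=\{1,\dots,|S|\}$ is a legal choice in~(\ref{e:cgap}), so $\sum_{i=1}^{|S|}d_S(i)=w(S,T)\le(1-\cgap(G))|S|$, and more generally $w(S,R')\le(1-\cgap(G))|S|$ for \emph{every} $R'\subseteq V$ with $|R'|=|S|$. Write $R:=\{i:d_S(i)>\tfrac12\}$, which is an initial segment of the sorted order, so that $x=\sum_{i\in R}(d_S(i)-\tfrac12)=w(S,R)-\tfrac12|R|$; and from $\sum_i d_S(i)=|S|$ together with $d_S(i)>\tfrac12$ on $R$ we get $|R|<2|S|$. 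If $R=\emptyset$ then $x=0$ and there is nothing to prove, so assume $R\ne\emptyset$. (Equivalently, one may phrase $x$ as $\max_{z}\bigl(C(d_S,z)-\tfrac z2\bigr)$ and bound this concave piecewise‑linear function using $C(d_S,0)=0$, $C(d_S,|S|)\le(1-\cgap(G))|S|$, and slope $\le1$; I will proceed in the equivalent combinatorial language.)

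If $|R|\le|S|$, I would argue directly: $d_S(i)\le1$ gives $d_S(i)-\tfrac12\le\tfrac12 d_S(i)$, hence $x\le\tfrac12 w(S,R)$; extending $R$ to a set $R'$ of size exactly $|S|$ only increases $w(S,\cdot)$, so $x\le\tfrac12 w(S,R')\le\tfrac12(1-\cgap(G))|S|$, as wanted. (This case also covers $\cgap(G)\ge\tfrac12$, since the next case will turn out to be vacuous there.)

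The remaining case $|S|<|R|<2|S|$ is the heart of the argument, because here the combinatorial gap no longer directly caps the sum $\sum_{i\in R}d_S(i)$ that governs $x$. I would first note that this case forces $\cgap(G)<\tfrac12$: the top $|S|$ vertices all lie in $R$, so $w(S,\{1,\dots,|S|\})>\tfrac12|S|$, while the cap gives $w(S,\{1,\dots,|S|\})\le(1-\cgap(G))|S|$. Split $R$ into its top $|S|$ vertices and the remainder $R_2$ with $|R_2|=|R|-|S|$. The top block contributes $w(S,\{1,\dots,|S|\})-\tfrac12|S|\le(\tfrac12-\cgap(G))|S|$; and every vertex of $R_2$ has $d_S$‑value at most the smallest of the top $|S|$ values, hence at most their average $\le1-\cgap(G)$, so $R_2$ contributes at most $(\tfrac12-\cgap(G))|R_2|$. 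Adding these gives the first bound $x\le(\tfrac12-\cgap(G))|R|$. For the second bound I use the complementary area $y:=\sum_i\min\{d_S(i),\tfrac12\}=|S|-x$: every vertex of $R$ contributes exactly $\tfrac12$ to $y$, so $y\ge\tfrac12|R|$, i.e.\ $x\le|S|-\tfrac12|R|$. Eliminating $|R|$ between $|R|\ge x/(\tfrac12-\cgap(G))$ and $x\le|S|-\tfrac12|R|$ yields $x\le\tfrac{1-2\cgap(G)}{2(1-\cgap(G))}\,|S|$, and $\tfrac{1-2\cgap(G)}{2(1-\cgap(G))}\le\tfrac{1-\cgap(G)}{2}$ is equivalent to $0\le\cgap(G)^2$. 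This completes the bound $x\le\tfrac12(1-\cgap(G))|S|$.

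The step I expect to be the main obstacle is precisely this last case: when many vertices receive large weight from $S$, one cannot exhibit a single $|S|$‑set witnessing all of $x$, so the gap‑based estimate must be traded against the crude mass constraint $y\ge\tfrac12|R|$. Everything else — the identity $x=w(S,R)-\tfrac12|R|$, the monotonicity of $w(S,\cdot)$, the averaging bound on $R_2$, and the final one‑line algebra — is routine; I would only double‑check the degenerate cases $R=\emptyset$ and $\cgap(G)\ge\tfrac12$, and the harmless presence of vertices with $d_S(i)$ exactly equal to $\tfrac12$.
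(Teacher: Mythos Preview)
Your proof is correct and follows the same overall plan as the paper: the identical case split on $|R|\lessgtr|S|$ (the paper's index $k$ is your $|R|$), the same first case via $d_S(i)-\tfrac12\le\tfrac12 d_S(i)$, and in the second case the same two ingredients, namely $w(S,T)\le(1-\cgap)|S|$ for $T=\{1,\dots,|S|\}$ and the pointwise bound $d_S(i)\le 1-\cgap$ for $i>|S|$.

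Where you diverge slightly is in how you combine these ingredients in the second case. The paper uses the multiplicative estimate $d_S(i)-\tfrac12\le\tfrac{1-\cgap}{2}\,d_S(i)$ on the tail and then invokes the total mass $\sum_{i>|S|}d_S(i)=|S|-w(S,T)$ in one chain of inequalities, arriving at the intermediate bound $x\le\tfrac12(1-\cgap-\cgap^2)|S|$. You instead use the additive estimate $d_S(i)-\tfrac12\le\tfrac12-\cgap$ on the tail to get $x\le(\tfrac12-\cgap)|R|$, and separately record the mass constraint as $x\le|S|-\tfrac12|R|$, then eliminate $|R|$; this yields the slightly sharper intermediate $x\le\tfrac{1-2\cgap}{2(1-\cgap)}|S|$. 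Both collapse to the stated bound via the same final inequality $1-2\cgap\le(1-\cgap)^2$. Your organization makes the role of the two constraints (gap versus mass) more transparent; the paper's is a bit more compact.
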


First, we assume the lemmas are correct and derive Lemma~\ref{l:comb_drop}.

\begin{proof}[Proof of Lemma~\ref{l:comb_drop}]
Let $S$ be a subset of $V$.
By Lemma~\ref{l:drop_by_upper_area}, Lemma~\ref{l:upper_area_bound} and the concavity of $C$, we have
\[
(Ap)(S)
\leq \frac12 (C(p, 2 x) + C(p, 2(|S| - x)))
\leq \frac12 (C(p, (1 - \cgap(G))|S|) + C(p, (1 + \cgap(G))|S|)),
\]
where the last inequality holds since for any concave function $f$ and any values $a \leq b \leq c \leq d$ with $a + d = b + c$, we have $f(a) + f(d) \leq f(b) + f(c)$.
Since the argument applies to all subset $S \subseteq V$ with $|S| \leq n/2$, we have
\[
C(Ap, |S|)
= \max_{S' \subseteq V, |S'| = |S|} (Ap)(S')
\leq \frac12 \big(C(p, (1 - \cgap(G))|S|) + C(p, (1 + \cgap(G))|S|)\big).
\]
\end{proof}

Next we prove Lemma~\ref{l:drop_by_upper_area}, which follows from the concavity of the curve $C$.

\begin{proof}[Proof of Lemma~\ref{l:drop_by_upper_area}]
Recall that $(Ap)(S) = \sum_i p(i) \cdot d_S(i)$.
For convenience, we define the boundary values to be 
$d_S(0) = 1$ and $d_S(n + 1) = 0$, and also define $p(0)=0$.
Then, there is an index $k$ such that $d_S(k) > 1 / 2 \geq d_S(k + 1)$.
By looking at the bar chart in Figure~\ref{f:barchart} horizontally as in the outline and considering the telescoping sum, we have
\begin{align*}
(Ap)(S)
&= \sum_{i = 0}^n d_S(i) \cdot p(i)
= \sum_{i = 0}^n (d_S(i) - d_S(i + 1)) (\sum_{j = 1}^i p(j))  
\leq \sum_{i = 0}^n (d_S(i) - d_S(i + 1)) \cdot C(p, i)\\
&= \left( \sum_{i = 0}^{k - 1} (d_S(i) - d_S(i + 1)) \cdot C(p, i)  + (d_S(k) - \frac12) \cdot C(p, k) \right)\\
&\hspace{1in} + \left( (\frac12 - d_S(k + 1)) \cdot C(p, k) + \sum_{i = k + 1}^n (d_S(i) - d_S(i + 1)) \cdot C(p, i) \right).
\end{align*}
Recall that $C(p)$ is concave, and for any concave function $f$, we have by Jensen's inequality
\[
\sum_i a_i f(i)
\leq (\sum_i a_i) f\big(\sum_i \frac{a_i}{\sum_i a_i} i\big).
\]
Since 
\[\sum_{i = 0}^{k - 1} (d_S(i) - d_S(i + 1)) + (d_S(k) - \frac12 ) = d_S(0) - \frac12 = \frac12\] 
and 
\[(\frac12 - d_S(k+1)) + \sum_{i = k+1}^{n} (d_S(i) - d_S(i + 1)) = \frac12 - d_S(n+1) = \frac12,\] 
by applying Jensen's inequality with $f=C(p)$, we have
\begin{align*}
(Ap)(S)
&\leq \frac12 C \left( p, 2 \Big(\sum_{i = 0}^{k - 1} (d_S(i) - d_S(i + 1)) \cdot i + (d_S(k) - \frac12) \cdot k \Big) \right)\\
&\hspace{1in} + \frac12 C \left( p, 2 \Big( (\frac12 - d_S(k + 1)) \cdot k + \sum_{i = k + 1}^n (d_S(i) - d_S(i + 1)) \cdot i \Big) \right).
\end{align*}
Finally, note that the first sum
\[
\sum_{i = 0}^{k - 1} (d_S(i) - d_S(i + 1)) \cdot i + (d_S(k) - \frac12) \cdot k
= \sum_{i = 1}^k (d_S(i) - \frac12)
= \sum_{i = 1}^n \max(d_S(i) - \frac12, 0)
= x.
\]
Since $\sum_{i = 0}^n (d_S(i) - d_S(i + 1)) \cdot i = \sum_{i = 1}^n d_S(i) = |S|$ by (\ref{e:sum}), the second sum is $|S|-x$, and so we have
\[
(Ap)(S)
\leq \frac12 C(p, 2 x) + C(p, 2(|S| - x)).
\]
\end{proof}

Now we prove Lemma~\ref{l:upper_area_bound}, which uses the definition of the combinatorial gap in (\ref{e:cgap}).

\begin{proof}[Proof of Lemma~\ref{l:upper_area_bound}]
Recall that we sort the vertices such that $1 \geq d_S(1) \geq \ldots \geq d_S(n) \geq 0$.
Let $k$ be the index such that $d_S(k) > 1 / 2 \geq d_S(k + 1)$.
Let $T = \{1,\ldots,|S|\}$ be the subset of the first $|S|$ vertices.
We consider two cases.
The first case is when $k \leq |S|$, in which
\[
x
= \sum_{i=1}^n \max\{d_S(i) - \frac12,0\}
= \sum_{i = 1}^k (d_S(i) - \frac12)
\leq \sum_{i = 1}^k \frac{d_S(i)}2
\leq \sum_{i = 1}^{|S|} \frac{d_S(i)}2
\leq \frac12 (1 - \cgap(G)) |S|,
\]
where the first inequality holds as $d_S(i) \in [0,1]$,
and the last inequality holds by using (\ref{e:cgap}) to obtain 
$\cgap(G) \leq 1 - w(S,T)/|S| = 1 - \sum_{i=1}^{|S|} d_S(i) /|S|$.

The second case is when $k > |S|$.
Note that $d_S(|S|) \leq 1 - \cgap(G)$, as otherwise $T$ would violate~(\ref{e:cgap}).
Then, for any $i \geq |S|$, we have $d_S(i) \leq d_S(|S|) \leq 1 - \cgap(G)$.
Therefore, for any $i \geq |S|$,
\[
d_S(i) - \frac12
\leq d_S(i) \big( \frac{\frac12 - \cgap(G)}{1 - \cgap(G)} \big)
\leq \frac{d_S(i) \cdot (1 - \cgap(G))}2,
\]
and we have
\begin{align*}
x
&= \sum_{i = 1}^k (d_S(i) - \frac12)
\leq \sum_{i = 1}^{|S|} (d_S(i) - \frac12) + \sum_{i = |S| + 1}^k (d_S(i) - \frac12)
\leq \sum_{i = 1}^{|S|} (d_S(i) - \frac12) + (\frac{1-\cgap(G)}{2}) \sum_{i = |S| + 1}^k d_S(i)\\
& = \big( w(S,T) - \frac{|S|}2 \big) + (\frac{1-\cgap(G)}{2}) (|S| - w(S,T))
= (\frac{1 + \cgap(G)}2) w(S,T) - \frac{|S|}2 + (\frac{1 - \cgap(G)}2) |S|\\
&\leq (\frac{1 + \cgap(G)}2) (1-\cgap(G)) |S| - \frac{|S|}2 + (\frac{1 - \cgap(G)}2)|S|
\leq (\frac{1 - \cgap(G)}2) |S|,
\end{align*}
where the second last inequality is by~(\ref{e:cgap}).
Hence, in any case, we have $x \leq \frac12 (1 - \cgap(G)) |S|$.
\end{proof}

\subsubsection*{Evolving Sets}

Finally, we prove Theorem~\ref{t:cgap} about the gauge of the evolving set process, which involves very similar calculations as in the proof of Lemma~\ref{l:comb_drop}.
Recall that the gauge is defined as $\psi(S) = 1 - \expe[\sqrt{|\tilde S| / |S|}]$.
The following claim follows from Morris and Peres~\cite{MP}.

\begin{claim}[\cite{MP}, Equation 27] \label{c:MP}
For any $t \in [0, 1]$, we have
\[
t \cdot \expe[|\tilde S|~ |~U \leq t]
= \sum_{i \in V} \min\{t, d_S(i)\}.
\]
\end{claim}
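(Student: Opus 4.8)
The plan is to unfold the definition of the evolving set process and then exchange a finite sum with an integral. Recall that from the current set $S$, the transition picks $U$ uniformly from $[0,1]$ and sets $\tilde S = \{y : w(y,S) \geq U\}$; in the notation $d_S(i) = w(i,S)$ introduced above, this means $|\tilde S| = \sum_{i \in V} \1[d_S(i) \geq U]$, which is a function of the random threshold $U$.

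First I would dispose of the trivial case $t = 0$, where both sides equal $0$. For $t \in (0,1]$, the event $\{U \leq t\}$ has probability $t$, and conditioning on it leaves $U$ uniformly distributed on $[0,t]$, so
\[
t \cdot \expe\big[\,|\tilde S|\ \big|\ U \leq t\,\big]
= \int_0^t |\tilde S(u)|\, du
= \int_0^t \sum_{i \in V} \1[d_S(i) \geq u]\, du,
\]
where $|\tilde S(u)|$ denotes the value of $|\tilde S|$ when the threshold equals $u$.

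Since the sum over $V$ is finite, I can swap it with the integral and evaluate each term separately: $\int_0^t \1[d_S(i) \geq u]\, du$ equals $t$ when $d_S(i) \geq t$ and equals $d_S(i)$ when $d_S(i) < t$ (using $d_S(i) \in [0,1]$), i.e.\ it is exactly $\min\{t, d_S(i)\}$. Summing over $i \in V$ yields $\sum_{i \in V} \min\{t, d_S(i)\}$, which is the claimed identity. There is no real obstacle here: the only points deserving a word of care are the degenerate threshold $t = 0$ and the routine justification that conditioning on $\{U \leq t\}$ produces the uniform law on $[0,t]$, so that the conditional expectation is indeed the average of $|\tilde S(u)|$ over $u \in [0,t]$.
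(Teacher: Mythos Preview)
Your proof is correct and essentially the same as the paper's. The paper phrases the per-vertex computation as $\Pr[i \in \tilde S \mid U \leq t] = \min\{1, d_S(i)/t\}$ and then invokes linearity of expectation, while you write the equivalent integral $\int_0^t \1[d_S(i)\geq u]\,du$ and swap sum with integral; these are the same calculation in slightly different notation.
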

\begin{proof}
Note that $\prob[i \in \tilde S~|~U \leq t] = \min\{1, d_S(i) / t\}$.
It follows that 
\[\expe[|\tilde S|~|~U \leq t] = \sum_{i \in V} \prob[i \in \tilde S~|~U \leq t] = \sum_{i \in V} \min\{1, d_S(i) / t\}.\]
\end{proof}

When $t=1/2$, Claim~\ref{c:MP} implies that 
\[
\frac12 \expe[|\tilde S|~|~U \leq \frac12] = \sum_{i \in V} \min\{\frac12, d_S(i)\},
\]
which is the lower area under the threshold $1/2$ (see Figure~\ref{f:barchart}).
And 
\[
\frac12 \expe[|\tilde S|~|~U \geq \frac12] 
= \expe[\tilde S] - \frac12 \expe[|\tilde S|~|~U \leq \frac12]
= \sum_{i \in V} d_S(i)- \sum_{i \in V} \min\{\frac12, d_S(i)\}
= \sum_{i \in V} \max\{d_S(i)-\frac12,0\},
\]
which is the upper area above the threshold $1/2$, and this is at most $\frac12 (1 - \cgap(G)) |S|$ by Lemma~\ref{l:upper_area_bound}.
We are ready to prove Theorem~\ref{t:cgap}.

\begin{proof}[Proof of Theorem~\ref{t:cgap}]
By definition,
\[
\psi(S)
= 1 - \expe[\sqrt{|\tilde S| / |S|}]
= 1 - \frac12 \expe[\sqrt{|\tilde S|/|S|}~|~U \leq \frac12] - \frac12 \expe[\sqrt{|\tilde S|/|S|}~|~U \geq \frac12].
\]
Let $x = \sum_i \max\{d_S(i) - \frac12, 0\}$ be the upper area.
We know that $x \leq \frac12 (1 - \cgap(G)) |S|$ by Lemma~\ref{l:upper_area_bound}.
By concavity,
\[
\frac12 \expe[\sqrt{|\tilde S|/|S|}~|~U \geq \frac12]
\leq \frac12 \sqrt{ \frac{1}{|S|} \expe[|\tilde S|~|~U \geq \frac12] }
= \frac12 \sqrt{\frac{2x}{|S|}}.
\]
Similarly,
\[
\frac12 \expe[\sqrt{|\tilde S|/|S|}~|~U \leq \frac12]
\leq \frac12 \sqrt{\frac{1}{|S|} \expe[|\tilde S|~|~U \leq \frac12]}
= \frac12 \sqrt{\frac{2(|S|-x)}{|S|}}.
\]
Plugging these two inequalities back into the first one, we have
\[
\psi(S)
\geq 1 - \frac12 (\sqrt{2 - \frac {2 x}{|S|}} + \sqrt{\frac{2 x}{|S|}})
\geq 1 - \frac12 (\sqrt{1 + \cgap(G)} + \sqrt{1 - \cgap(G)})
\geq \cgap(G)^2 / 8,
\]
where the second inequality is because $\sqrt{2 - 2x/|S|} + \sqrt{2x/|S|} \leq \sqrt{2 - 2y/|S|} + \sqrt{2y/|S|}$ for $x \leq y \leq |S|/2$ and we put in $y = \frac12 (1 - \cgap(G))|S|$,
and the last inequality is by Taylor expansion of the function $\sqrt{1+\cgap}$.
\end{proof}

\subsubsection*{Small-Set Expansion of Graph Powers} \label{s:power}

In this subsection, we prove Corollary~\ref{c:power} about small-set expansion of graph powers.
Recall that the $\delta$-small-set expansion is defined as
\[
\phi_\delta(G)
= \min_{S \subseteq V, |S| \leq \delta n} 1 - \frac{w(S, S)}{|S|}.
\]
We define the $\delta$-small-set combinatorial gap as
\[
\cgap_\delta(G)
= \min_{S \subseteq V, T \subseteq V, |S| = |T| \leq \delta n} 1 - \frac{w(S, T)}{|S|}.
\]
The following is a simple relation between these two quantities.
\begin{lemma} \label{l:relation}
For any $\delta \leq 1 / 2$, 
we have $\cgap_{\delta / 2}(G) \geq \phi_\delta(G) / 2$.
\end{lemma}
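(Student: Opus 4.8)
The plan is to relate the combinatorial gap $\cgap_{\delta/2}(G)$ of a graph to its edge expansion $\phi_\delta(G)$ by a symmetrization trick: given two sets $S, T$ of equal size at most $\delta n/2$ that witness a small combinatorial gap, I would take their union $R := S \cup T$ and show that $R$ is a set of size at most $\delta n$ whose edge expansion is not much larger than $1 - w(S,T)/|S|$. So first I would fix $S, T$ with $|S| = |T| =: s \le \delta n / 2$ achieving $\cgap_{\delta/2}(G) = 1 - w(S,T)/s$, and set $R = S \cup T$, noting $|R| \le 2s \le \delta n$, so that $\phi_\delta(G) \le \phi(R) = 1 - w(R,R)/|R|$.

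Next, the key step is to lower-bound $w(R,R)$ in terms of $w(S,T)$. Since $S, T \subseteq R$, every edge counted in $w(S,T)$ — i.e. every edge with one endpoint in $S$ and the other in $T$ — has both endpoints in $R$, so such edges contribute to $w(R,R)$. Being slightly careful about the case $S \cap T \ne \emptyset$ and about whether $w(S,T)$ double-counts edges inside $S \cap T$: in the regular, degree-normalized setting used throughout the paper, $w(S,T) = \sum_{i \in S} d_T(i) = \sum_{i \in S} w(i,T)$, and $w(R,R) = \sum_{i \in R} w(i,R) \ge \sum_{i \in S} w(i,R) \ge \sum_{i \in S} w(i,T) = w(S,T)$, using $T \subseteq R$ and nonnegativity of weights. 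Hence $w(R,R) \ge w(S,T)$, and since $|R| \le 2s$,
\[
\phi_\delta(G) \le 1 - \frac{w(R,R)}{|R|} \le 1 - \frac{w(S,T)}{2s} = 1 - \frac12\Big(1 - \cgap_{\delta/2}(G)\Big) \cdot \frac{s}{s}.
\]
Wait — I should track the algebra cleanly: from $w(S,T) = (1 - \cgap_{\delta/2}(G)) s$ and $|R| \le 2s$ we get $\phi_\delta(G) \le 1 - (1-\cgap_{\delta/2}(G))s/(2s) = 1 - \frac12(1 - \cgap_{\delta/2}(G)) = \frac12 + \frac12\cgap_{\delta/2}(G)$, which is an upper bound by a quantity \emph{larger} than $1/2$ and therefore useless in that direction. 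This tells me the inequality must actually go the intended way only after observing $\cgap_{\delta/2}(G) \le \phi_{\delta/2}(G) \le 1$; more importantly it signals that the real content is the reverse chain, so I would instead argue: $\phi_\delta(G) \le \phi(R) \le 1 - w(S,T)/(2s) $, and then rearrange to $\cgap_{\delta/2}(G) = 1 - w(S,T)/s \ge 2(1 - w(R,R)/|R| \cdot \tfrac{|R|}{2s}) $... the clean statement is $w(S,T)/s \le 1$ trivially, and $1 - w(S,T)/s \ge 1 - 2w(R,R)/|R|$ is false in general, so the correct route is: $\phi(R) = 1 - w(R,R)/|R| \ge 1 - w(R,R)/|R|$, and since $|R|\le 2s$, $w(R,R)/|R| \ge w(R,R)/(2s) \ge w(S,T)/(2s)$... no. Let me restate the target as $\cgap_{\delta/2}(G) \ge \phi_\delta(G)/2$, equivalently $1 - w(S,T)/s \ge \frac12\phi_\delta(G)$, equivalently $\phi_\delta(G) \le 2 - 2w(S,T)/s$. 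Since $w(S,T) \le s$ always, $2 - 2w(S,T)/s \ge 0$; and $\phi_\delta(G) \le \phi(R) = 1 - w(R,R)/|R|$. So it suffices to show $1 - w(R,R)/|R| \le 2 - 2w(S,T)/s$, i.e. $w(R,R)/|R| \ge 2w(S,T)/s - 1$. Using $|R| \le 2s$ and $w(R,R) \ge w(S,T)$: $w(R,R)/|R| \ge w(S,T)/(2s) \ge 2w(S,T)/s - 1$ iff $1 \ge 2w(S,T)/s - w(S,T)/(2s) = \tfrac{3}{2}w(S,T)/s$, which holds whenever $w(S,T)/s \le 2/3$; and when $w(S,T)/s > 2/3$ one has $2 - 2w(S,T)/s < 2/3 < 1 \ge \phi_\delta(G)$ only if $\phi_\delta(G)$ is small, so this needs a bound $\phi_\delta(G) \le 2 - 2w(S,T)/s$ directly — here I would instead bound $w(R,R)$ more tightly, e.g. $w(R,R) \ge w(S,T) + $ (internal edges), or better, normalize $|R|$ from below.

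The main obstacle, then, is getting the constant right: a crude union bound on $|R|$ loses a factor of $2$ in the set size and a crude $w(R,R) \ge w(S,T)$ loses edges, and these two losses must be balanced so the final constant is exactly $1/2$. I expect the clean argument is: take $R = S \cup T$, so $|R| \le 2s \le \delta n$ and $\phi_\delta(G) \le \phi(R)$; then $w(R, V - R) \le w(S, V - T) + w(T, V - S) = 2(s - w(S,T)/1)\cdot$(appropriate normalization)$\,= 2s - 2w(S,T)$ roughly, since edges leaving $R$ either leave $S$ avoiding $T$ or leave $T$ avoiding $S$; dividing by $|R| \le 2s$ gives $\phi(R) \le (2s - 2w(S,T))/|R|$, and I want this $\le 2(1 - w(S,T)/s) = (2s - 2w(S,T))/s$, which holds precisely because $|R| \le 2s$ would give the wrong direction — so actually I need $|R| \ge s$, which is automatic since $|R| \ge |S| = s$. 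Thus $\phi(R) \le (2s - 2w(S,T))/|R| \le (2s - 2w(S,T))/s = 2(1 - w(S,T)/s) = 2\cgap_{\delta/2}(G)$, using $|R| \ge s$ and $2s - 2w(S,T) \ge 0$. Therefore $\phi_\delta(G) \le 2\cgap_{\delta/2}(G)$, i.e. $\cgap_{\delta/2}(G) \ge \phi_\delta(G)/2$, as desired. The only real care needed is the bound $w(R, V-R) \le w(S, V-T) + w(T, V-S)$ (every edge out of $R$ has an endpoint in $R$, hence in $S$ or in $T$, and its other endpoint is outside $R$ hence outside both $S$ and $T$), together with $w(S,V-T) = s - w(S,T)$ and $w(T, V-S) = |T| - w(T,S) = s - w(S,T)$ from the degree-one normalization — these are the routine steps I would verify but not belabor.
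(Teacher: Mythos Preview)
Your final argument is correct and is essentially the paper's proof: both take $R = S\cup T$, bound its expansion, and finish using $|R|\ge |S|=s$; the paper lower-bounds $w(R,R)\ge 2w(S,T)-w(S\cap T,S\cap T)\ge 2w(S,T)-|S\cap T|$, which is exactly equivalent (via $w(R,V{-}R)=|R|-w(R,R)$ and $|R|=2s-|S\cap T|$) to your upper bound $w(R,V{-}R)\le w(S,V{-}T)+w(T,V{-}S)=2s-2w(S,T)$. The many false starts in your write-up should be deleted --- only the last paragraph is the proof.
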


\begin{proof}
Suppose $S, T$ are two subsets of size at most $\delta n/2$ that achieve $w(S, T) = (1 - \cgap_{\delta / 2}(G))|S|$ and $|S|=|T|$.
We argue that $S \cup T$ has small expansion, since
\begin{align*}
w(S \cup T, S \cup T)
&\geq 2w(S, T) - w(S \cap T, S \cap T)
\geq 2 (1 - \cgap_{\delta / 2}(G))|S| - |S \cap T|\\
&= 2|S| - |S \cap T| - 2 |S| \cgap_{\delta / 2}(G)
= |S \cup T| - 2|S| \cgap_{\delta/2}(G)
\geq |S \cup T| (1 - 2 \cgap_{\delta / 2}(G)).
\end{align*}
Since $|S \cup T| \leq 2 |S| \leq \delta n$, this implies that $\phi_\delta(G) \leq 1 - w(S \cup T, S \cup T)/|S \cup T| \leq 2\cgap_{\delta/2}(G)$.
\end{proof}

In~\cite{KL14}, using the inequality~(\ref{e:chord}), it is proved that for any lazy graph,
\[
\phi_{\delta / 2}(G^t)
\geq \Omega(\min(\sqrt t \cdot \phi_\delta(G), 1)).
\]
The assumption of laziness is only used for the inequality~(\ref{e:chord}) to hold.
Now, with Lemma~\ref{l:comb_drop}, we can replace $\phi_\delta(G)$ by $\cgap_\delta(G)$ and prove that for any graph $G$,
\[
\phi_{\delta / 2}(G^t)
\geq \Omega(\min(\sqrt t \cdot \varphi_\delta(G), 1)).
\]
Combining with Lemma~\ref{l:relation}, it follows that for any graph $G$,
\[
\phi_{\delta / 4}(G^t)
\geq \Omega(\min(\sqrt t \cdot \phi_\delta(G), 1)).
\]

\subsection{Vertex Expansion} \label{s:vertex}

We will prove Theorem~\ref{t:vertex} in this subsection.
As in Section~\ref{s:cgap}, we will first prove faster convergence for random walks and then for evolving sets.
Intuitively, larger vertex expansion will lead to faster mixing, as the probability coming into a set $S$ is from many different vertices, and so it cannot be the case that all probability in $S$ come from a small number of vertices with high probability.
Our proof idea is also to look at the bar chart in Figure~\ref{f:barchart}, and translate the definition of robust vertex expansion into an upper bound of the area of the largest vertices.

For $S \subseteq V$, we consider the vector $d_S \in \R^n$ with the $i$-th entry being $d_S(i)$, and as before we assume that $d_S(1) \geq d_S(2) \geq \ldots \geq d_S(n)$.
We use the notation $C(d_S,x)$ in~(\ref{e:curve}) on $d_S$ so that we can talk about the largest $x$ values in the vector $d_S$ (note that $x$ could be non-integral).

In Theorem~\ref{t:vertex}, unlike in Theorem~\ref{t:cgap}, we need the additional assumption that the random walk matrix is lazy, such that $w(i,i) \geq 1/2$ for any $i \in V$.
The main reason of this assumption is to have $d_S(i) \geq 1/2$ for $i \in S$ and $d_S(j) \leq 1/2$ for $j \notin S$, so that we can assume that 
\[S = \{1, 2, \ldots, |S|\},\] 
i.e. the vertices in $S$ are the vertices with the largest values in $d_S$.
Recall that we defined 
\[N_{1/2}(S) := \min\{|T|~|~T \subseteq V-S {\rm~and~} w(S,T) \geq \frac12 w(S,V-S)\}.\]
Since $S = \{1,\ldots,|S|\}$,
we have for any integral $x$,
\[\max_{T:T \subseteq V-S, |T|=x} w(S,T)
= \max_{T:T \subseteq V-S, |T|=x} \sum_{i \in T} d_S(i)
= \sum_{i=|S|+1}^{|S|+x} d_S(i)
= C(d_S, |S|+x) - C(d_S, |S|).\] 
In words, the set of vertices $T \subseteq V-S$ that maximize $w(S,T)$ are the vertices $\{|S|+1, \ldots, |S|+|T|\}$ in the ordering defined by $d_S$.
So, we can rewrite the definition of $N_{1/2}(S)$ as
\[
N_{1 / 2}(S)
:= \min \{ x~|~C(d_S, |S| + x) - C(d_S, |S|) \geq \frac12 w(S, V-S) \}.
\]
Note that we allow $x$ to be non-integral.
This differs by at most one compared with the original definition in~\cite{KLM}, and will make our proofs much cleaner.
The robust vertex expansion is defined as $\phi^V(S) := \min \{ N_{1 / 2}(S) / |S|, 1 \}$ and $\phi^V(G) := \min_{S : |S| \leq |V| / 2} \phi^V(S)$ as before.
Similarly, $\Psi(S)$ is defined as before using the new definition of $N_{1/2}(S)$.
The following lemma translates the definition of $\phi^V(S)$ to a bound on the cumulative sum of the largest vertices in the bar chart.

\begin{lemma} \label{l:vertex-profile}
For any $S \subseteq V$ with $|S| \leq |V| / 2$, we have
\[
C(d_S, (1 + \phi^V(S)) |S|) \le (1 - \frac{\phi(S)}2) |S|.
\]
\end{lemma}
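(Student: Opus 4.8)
The plan is to unpack the definition of $\phi^V(S)=N_{1/2}(S)/|S|$ (assuming for now that $\phi^V(S)<1$, since otherwise $C(d_S,(1+\phi^V(S))|S|)\le C(d_S,n)=\sum_i d_S(i)=|S|$ trivially, and the bound $(1-\phi(S)/2)|S|$ is only better by a bounded factor that can be absorbed). By the reformulation of $N_{1/2}(S)$ derived just above the lemma, taking $x=N_{1/2}(S)=\phi^V(S)|S|$ gives exactly
\[
C(d_S,(1+\phi^V(S))|S|)-C(d_S,|S|)\ge \tfrac12 w(S,V-S).
\]
Wait — that inequality points the wrong way for what we want. The point is rather that $N_{1/2}(S)$ is the \emph{minimum} such $x$, so for any $x'<N_{1/2}(S)$ we have $C(d_S,|S|+x')-C(d_S,|S|)<\tfrac12 w(S,V-S)$; taking $x'\to N_{1/2}(S)$ from below and using continuity of $C(d_S,\cdot)$ yields $C(d_S,(1+\phi^V(S))|S|)-C(d_S,|S|)\le \tfrac12 w(S,V-S)$.

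Next I would evaluate the two remaining quantities. Because the walk is lazy, $S=\{1,\dots,|S|\}$, so $C(d_S,|S|)=\sum_{i\in S}d_S(i)=w(S,S)=w(S,V)-w(S,V-S)=|S|-w(S,V-S)$, using that the weighted degree of each vertex is $1$. Also $w(S,V-S)=\phi(S)|S|$ by definition of edge expansion. Substituting,
\[
C(d_S,(1+\phi^V(S))|S|)\le \big(|S|-\phi(S)|S|\big)+\tfrac12\phi(S)|S|=\Big(1-\tfrac{\phi(S)}{2}\Big)|S|,
\]
which is precisely the claimed bound. So the argument is essentially: (i) reduce to the $\phi^V(S)<1$ case; (ii) use the ``minimum $x$'' characterization of $N_{1/2}(S)$ plus continuity of the concave curve to get the one-sided inequality $C(d_S,|S|+N_{1/2}(S))-C(d_S,|S|)\le \tfrac12 w(S,V-S)$; (iii) compute $C(d_S,|S|)=|S|-\phi(S)|S|$ from laziness and the degree normalization; (iv) add.

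I expect the only genuinely delicate point to be the use of the ``minimum'' in the definition of $N_{1/2}(S)$ — one must make sure that the weak inequality (rather than strict) holds at $x=N_{1/2}(S)$ itself. Since $C(d_S,\cdot)$ is continuous (piecewise linear), and $N_{1/2}(S)=\inf\{x: C(d_S,|S|+x)-C(d_S,|S|)\ge \tfrac12 w(S,V-S)\}$ where the set in question is a closed ray, the infimum is attained and the inequality at $x=N_{1/2}(S)$ is in fact an equality when $\phi^V(S)<1$; either way, $\le$ holds, which suffices. A secondary bookkeeping point is the truncation $\phi^V(S)=\min\{N_{1/2}(S)/|S|,1\}$: when the min is $1$ we are in the trivial case noted above, so no circularity arises. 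Everything else is a one-line substitution, so I would keep the write-up short.
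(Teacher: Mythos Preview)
Your main argument---steps (ii)--(iv)---is correct and is essentially the paper's proof: use continuity of $C(d_S,\cdot)$ to conclude that at $x=N_{1/2}(S)$ the difference $C(d_S,|S|+x)-C(d_S,|S|)$ equals $\tfrac12 w(S,V-S)$, compute $C(d_S,|S|)=(1-\phi(S))|S|$ from laziness and the degree normalization, and add.

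The one genuine gap is your handling of the case $\phi^V(S)=1$. Your ``trivial'' bound $C(d_S,2|S|)\le C(d_S,n)=|S|$ does not prove the lemma as stated: the claimed right-hand side $(1-\phi(S)/2)|S|$ is strictly smaller whenever $\phi(S)>0$, and the lemma is an exact inequality, not a big-$O$ statement, so there is nothing downstream in which to ``absorb'' the slack. The fix is that the case split is unnecessary. Since $\phi^V(S)|S|=\min\{N_{1/2}(S),|S|\}\le N_{1/2}(S)$ in \emph{all} cases, and $N_{1/2}(S)$ is by definition the minimum $x$ with $C(d_S,|S|+x)-C(d_S,|S|)\ge\tfrac12 w(S,V-S)$, we have $C(d_S,|S|+y)-C(d_S,|S|)\le\tfrac12 w(S,V-S)$ for every $y\le N_{1/2}(S)$ (strict for $y<N_{1/2}(S)$, equality at $y=N_{1/2}(S)$ by continuity). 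Plugging $y=\phi^V(S)|S|$ then gives the bound uniformly, and your steps (iii)--(iv) finish the proof.
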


\begin{proof}
Since $C(d_S, |S| + x)$ is continuous with respect to $x$, 
the minimum in the definition of $N_{1 / 2}(S)$ is attained 
when $C(d_S, |S| + x) - C(d_S, |S|) = \frac12 w(S, V-S)$.
As $x = N_{1/2}(S) \le \phi^V(S) |S|$, we have
\[C(d_S, (1 + \phi^V(S)) |S|) - C(d_S, |S|) \le \frac12 w(S, V-S).\]
Finally, since $S=\{1,\ldots,|S|\}$, we have $C(d_S, |S|) = w(S,S) = (1 - \phi(S)) |S|$ and $w(S, V-S) = \phi(S) \cdot |S|$, we have
\[
C(d_S, (1 + \phi^V(S)) |S|) \le  (1 - \frac{\phi(S)}2) |S|.
\]
\end{proof}

Using Lemma~\ref{l:vertex-profile},
we will prove a bound similar to that of Lemma~\ref{l:comb_drop}.
The two steps (Lemma~\ref{l:drop_by_upper_area} and Lemma~\ref{l:upper_area_bound}) of proving Lemma~\ref{l:comb_drop} are integrated and steamlined in the proof of the following lemma. 

\begin{lemma}
\label{l:comb_drop_vertex}
Assume $C(d_S, a |S|) \leq b |S|$ for any $|S| \leq n/2$ for some $a > 1$ and $b < 1$, then for any $p \in \mathbb R^n$, we have for any $S$ with $|S| \leq n/2$,
\[
C(Ap, |S|)
\leq \big( \frac{a - b}{a - b^2} \big) \cdot C(p, b |S|) + \big( \frac{b - b^2}{a - b^2} \big) \cdot C(p, \frac{a |S|}b).
\]
\end{lemma}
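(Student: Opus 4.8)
The plan is to mimic the proof of Lemma~\ref{l:comb_drop}, but choosing the threshold adaptively rather than fixing it at $1/2$. Fix $S$ with $|S| \le n/2$ and, as in the outline, recall $(Ap)(S) = \sum_i d_S(i) p(i)$ with $d_S(1) \ge d_S(2) \ge \cdots \ge d_S(n)$. The first step is the same telescoping/Jensen manipulation used in Lemma~\ref{l:drop_by_upper_area}: for a threshold value $\tau \in [0,1]$, split the bar chart into the part above $\tau$ and the part below $\tau$, obtaining via concavity of $C(p,\cdot)$ a bound of the form
\[
(Ap)(S) \le (1-\tau)\, C\!\Big(p, \tfrac{x_\tau}{1-\tau}\Big) + \tau\, C\!\Big(p, \tfrac{|S|-x_\tau}{\tau}\Big),
\]
where $x_\tau := \sum_i \max\{d_S(i)-\tau, 0\}$ is the area above the threshold. (With $\tau = 1/2$ this is exactly Lemma~\ref{l:drop_by_upper_area}; the general version is the routine analogue, using $d_S(0)=1$, $d_S(n+1)=0$ as boundary values.)

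Next I would bound $x_\tau$ using the hypothesis $C(d_S, a|S|) \le b|S|$. Since $C(d_S, a|S|)$ is the sum of the $a|S|$ largest $d_S$-values, and each $d_S(i) \le 1$, a convexity/averaging argument shows that once the top $a|S|$ mass is capped at $b|S|$, the area above an appropriately chosen threshold is small. The natural choice is to take the threshold $\tau$ equal to the "average height beyond rank $|S|$," i.e. something like $\tau = \frac{b|S| - C(d_S,|S|)}{(a-1)|S|}$ won't be clean, so instead I expect the right move is to pick $\tau$ so that $x_\tau/(1-\tau) = b|S|$ and $(|S|-x_\tau)/\tau = a|S|/b$ simultaneously — solving these two linear equations in $x_\tau$ and $\tau$ forces $\tau = \frac{a-b}{a-b^2}\cdot\frac{1}{?}$ and yields precisely the coefficients $\frac{a-b}{a-b^2}$ and $\frac{b-b^2}{a-b^2}$ in the statement (note these sum to $1$, consistent with $\tau + (1-\tau)$ reweighting). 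So the strategy is: solve for the unique threshold $\tau$ making the two curve-arguments equal to $b|S|$ and $a|S|/b$ respectively, then verify that for this $\tau$ the hypothesis $C(d_S,a|S|)\le b|S|$ indeed forces $x_\tau \le (1-\tau)\, b|S|$, so that concavity/monotonicity of $C(p,\cdot)$ lets us replace $x_\tau/(1-\tau)$ by $b|S|$ (increasing the bound).

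The last step is to take the maximum over all $S'$ of size $|S|$, exactly as at the end of the proof of Lemma~\ref{l:comb_drop}, to conclude $C(Ap,|S|) \le \big(\tfrac{a-b}{a-b^2}\big) C(p, b|S|) + \big(\tfrac{b-b^2}{a-b^2}\big) C(p, \tfrac{a|S|}{b})$. The main obstacle I anticipate is the bookkeeping in the second step: verifying that the threshold $\tau$ determined by the two linear constraints actually lies in $[0,1]$ (using $a>1$, $b<1$, and presumably $b \ge $ something like the minimum of $d_S$ on $S$, which the laziness assumption $d_S(i)\ge 1/2$ for $i\in S$ should supply), and that the area bound $x_\tau \le (1-\tau) b|S|$ follows from $C(d_S, a|S|)\le b|S|$ rather than needing a stronger profile hypothesis. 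Once the threshold is pinned down correctly, the rest is the same concavity machinery already developed for Lemma~\ref{l:comb_drop}.
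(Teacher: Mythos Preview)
Your plan is exactly the paper's proof: apply the general-threshold version of Lemma~\ref{l:drop_by_upper_area} to get $(Ap)(S)\le (1-t)\,C(p,\tfrac{x}{1-t})+t\,C(p,\tfrac{|S|-x}{t})$, set $t=(b-b^2)/(a-b^2)$ (the same value your back-solving yields, with $1-t=(a-b)/(a-b^2)$), prove $x\le (1-t)\,b|S|$, and then use concavity to push $x$ up to this bound, which makes the two arguments exactly $b|S|$ and $a|S|/b$.

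For the step you flagged as the obstacle, the paper does a case split on the crossing index $k$ (where $d_S(k)>t\ge d_S(k+1)$). If $k\le a|S|$, then $x=\sum_{i\le k}(d_S(i)-t)\le(1-t)\sum_{i\le k}d_S(i)\le(1-t)\,C(d_S,a|S|)\le(1-t)\,b|S|$ using only $d_S(i)\le 1$. If $k>a|S|$, the hypothesis forces $d_S(i)\le b/a$ for all $i>a|S|$, hence $d_S(i)-t\le(1-at/b)\,d_S(i)$ on the tail, and summing gives $x\le |S|(1-at/b)$; at the chosen $t$ this equals $(1-t)\,b|S|$, so the two cases balance. Laziness is not used inside this lemma (it is only needed upstream, in Lemma~\ref{l:vertex-profile}, to produce the hypothesis with $a=1+\phi^V(S)$, $b=1-\phi(S)/2$), so you need not worry about it here.
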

\begin{proof}
Using the same concavity argument as in Lemma~\ref{l:drop_by_upper_area}, 
for any threshold $t \in (0, 1)$ 
and $k$ such that $d_S(k) > t \geq d_S(k + 1)$, we have for any $S$ with $|S| \leq n/2$,
\begin{align*}
(Ap)(S)
&\leq (1 - t) \cdot C \left( p, \frac{1}{1 - t} \Big(\sum_{i = 0}^{k - 1} (d_S(i) - d_S(i + 1)) \cdot i + (d_S(k) - t) \cdot k \Big) \right)\\
&\hspace{1in} + t \cdot C \left( p, \frac1t \Big( (t - d_S(k + 1)) \cdot k + \sum_{i = k + 1}^n (d_S(i) - d_S(i + 1)) \cdot i \Big) \right).
\end{align*}
Let $x = \sum_{i = 0}^{k - 1} (d_S(i) - d_S(i + 1)) \cdot i + (d_S(k) - t) \cdot k$ be the upper area above the threshold $t$.
Then, it follows that
\[
C(p,|S|) = \max_{S} (Ap)(S)
\leq (1 - t) \cdot C(p, \frac{x}{1-t}) + t \cdot C(p, \frac{|S| - x}{t}).
\]

It remains to prove an analog of Lemma~\ref{l:upper_area_bound} to bound the upper area $x$.
We again consider two cases.
The first case is when $k \leq a |S|$, in which
\[
x
= \sum_{i = 1}^k (d_S(i) - t)
\leq \sum_{i = 1}^k (1-t) \cdot d_S(i)
= (1-t) \cdot C(d_S, a |S|)
\leq (1 - t) \cdot b|S|,
\]
where the last inequality uses the assumption that $C(d_S, a|S|) \leq b|S|$.
The second case is when $k > a |S|$.
Note that for any $i > a |S|$, $d_S(i) \leq d_S(a |S|) \leq b / a$,
as otherwise the assumption $C(d_S,a|S|) \leq b|S|$ would be violated.
Hence, for any $i > a|S|$,
\[d_S(i) - t \leq \big( \frac{b/a - t}{b/a} \big) \cdot d_S(i) = (1 - \frac{at}{b}) \cdot d_S(i),\] 
and we get
\begin{align*}
x &= \sum_{i = 1}^k (d_S(i) - t) 
\leq C(d_S, a |S|) - a |S| t + \sum_{i = a |S| + 1}^k (1 - \frac{a t}b) \cdot d_S(i) \\
&= C(d_S, a |S|) - a |S| t + \big(|S| - C(d_S, a |S|)\big) (1 - \frac{a t}b)
\leq |S| (1 - \frac{a t}b),
\end{align*}
where the last inequality uses the assumption that $C(d_S, a|S|) \leq b|S|$ and some simple calculations.
We balance the two upper bounds $b |S| (1 - t)$ and $|S| (1 - a t / b)$ by choosing $t = (b - b^2) / (a - b^2)$,
so that in both cases we have 
\begin{equation} \label{e:area}
x \leq \frac{b (a - b)}{a - b^2} |S|.
\end{equation}
Putting the choice of $t$ and the bound on $x$ back (and using concavity), we obtain the conclusion of the lemma.
\end{proof}

Lemma~\ref{l:comb_drop_vertex} is a generalization of (\ref{e:chord}),
and we can use it to derive a generalization of (\ref{e:converge}).
Note that any probability distribution $p$ satisfies the condition $C(p,x) \leq \frac{x}{n} + \sqrt{\min\{x,n-x\}}$ in the following lemma. 

\begin{lemma}
\label{l:lovasz}
Assuming $C(d_S, a |S|) \leq b |S|$ for any $|S| \leq n / 2$ for some $a > 1$ and $b < 1$, 
then for any $p \in \R^n$ satisfying $C(p, x) \leq \frac{x}{n} + c\sqrt{\min\{x,n-x\}}$ for all $x \in [0,n]$ for some $c$,
we have
\[
C(A^t p, x)
\leq \frac{x}{n} + c \sqrt{\min\{x,n-x\}} \cdot (1 - \frac{(\sqrt a - \sqrt b) (1 - \sqrt b)}{\sqrt a + b})^t.
\]
\end{lemma}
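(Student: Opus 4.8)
The plan is to run the standard Lov\'asz--Simonovits inductive argument, using Lemma~\ref{l:comb_drop_vertex} as the single-step ``chord'' inequality in place of~(\ref{e:chord}), and to track how the resulting recursion on the curve decays. Write $\alpha = \frac{a-b}{a-b^2}$ and $\beta = \frac{b-b^2}{a-b^2}$, so that $\alpha+\beta = 1$ and Lemma~\ref{l:comb_drop_vertex} reads $C(Ap,y) \le \alpha\, C(p, by) + \beta\, C(p, ay/b)$ for integral $y \le n/2$; by piecewise linearity and concavity this extends to all real $y \in [0,n/2]$, and by the symmetry $C(p,y) = C(p,n) + C(\vec 1/n - p/\|\cdot\|, n-y)$-type reflection one also gets the mirrored statement for $y \ge n/2$ (this is the usual trick in~\cite{LS,ST,KL12} for handling the upper half of the curve). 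The induction hypothesis is the claimed bound with exponent $t$; one applies the chord inequality to $C(A^{t}p, x) = C(A\cdot A^{t-1}p, x)$, then invokes the hypothesis on $C(A^{t-1}p, bx)$ and $C(A^{t-1}p, ax/b)$.

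First I would set up the induction base case $t=0$, which is exactly the hypothesis $C(p,x) \le x/n + c\sqrt{\min\{x,n-x\}}$. For the inductive step, after applying the chord inequality and the hypothesis, the linear-in-$x$ part $x/n$ is reproduced exactly (since $\alpha b + \beta (a/b) = 1$, which is the point of the weights). The work is in the $\sqrt{\cdot}$ terms: one needs to bound
\[
\alpha\sqrt{bx} + \beta\sqrt{ax/b} = \sqrt{x}\Paren{\alpha\sqrt b + \beta\sqrt{a/b}}
\]
and show that $\alpha\sqrt b + \beta\sqrt{a/b} \le 1 - \frac{(\sqrt a-\sqrt b)(1-\sqrt b)}{\sqrt a + b}$. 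Substituting the values of $\alpha,\beta$ and clearing denominators, this is the identity
\[
\frac{(a-b)\sqrt b + (b-b^2)\sqrt{a/b}}{a-b^2} = \frac{\sqrt b(a-b) + \sqrt a\,(b - b^2)/\sqrt b \cdot \sqrt b}{a-b^2},
\]
which after simplification equals $\frac{\sqrt b\,(\sqrt a+b)\,\ldots}{a-b^2}$; one checks the algebra reduces exactly to $1 - \frac{(\sqrt a-\sqrt b)(1-\sqrt b)}{\sqrt a+b}$. This is a routine but slightly fiddly computation, best done by writing $a-b^2 = (\sqrt a - \sqrt b \cdot \sqrt b)(\ldots)$-style factorizations; I would verify it by clearing to a polynomial identity in $\sqrt a, \sqrt b$.

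The remaining point is that the decay factor $q := 1 - \frac{(\sqrt a-\sqrt b)(1-\sqrt b)}{\sqrt a+b}$ is a genuine contraction, i.e. $q \in (0,1)$ whenever $a>1 > b \ge 0$: positivity of the subtracted term is clear since $\sqrt a > 1 > \sqrt b$, and $q > 0$ follows because $(\sqrt a - \sqrt b)(1-\sqrt b) < (\sqrt a)(1) \cdot (\text{something}) \le \sqrt a + b$ after a short estimate. One also needs the elementary monotonicity fact $\sqrt{2-2u}+\sqrt{2u}$ is decreasing for $u \in [0,1/2]$, already used in the proof of Theorem~\ref{t:cgap}, to justify replacing the exact argument-values by the extreme ones when the chord endpoints $bx$ and $ax/b$ straddle $n/2$ — this boundary case (where one endpoint exceeds $n/2$ and the reflected form of the hypothesis must be used) is the main obstacle and requires the same care as in~\cite{LS}. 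I expect the cleanest route is to first prove the inequality for $x \le n/2$ with both endpoints $\le n$, then handle the straddling/upper-half case by the reflection symmetry of $C$ exactly as in the lazy proof, so that no new idea beyond~\cite{LS,ST} is needed there.
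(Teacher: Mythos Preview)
Your proposal is correct and follows essentially the same route as the paper: apply Lemma~\ref{l:comb_drop_vertex} once, plug in the inductive hypothesis, check that $\alpha b + \beta a/b = 1$ so the linear part reproduces, and verify the algebraic identity $\alpha\sqrt b + \beta\sqrt{a/b} = \frac{(1+\sqrt a)\sqrt b}{\sqrt a + b}$ (via the factorizations $(a-b)+(1-b)\sqrt a = (1+\sqrt a)(\sqrt a - b)$ and $a-b^2 = (\sqrt a - b)(\sqrt a + b)$), then iterate. One remark: the ``straddling'' case you flag as the main obstacle is actually harmless in the direction you need, since $\sqrt{\min\{y,n-y\}} \le \sqrt{y}$ for all $y\in[0,n]$, so the hypothesis at $y = ax/b$ can simply be weakened to $\tfrac{y}{n}+c\sqrt{y}$ regardless of whether $y$ exceeds $n/2$; no reflection or monotonicity trick is required there, and the paper just silently uses this.
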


\begin{proof}
We only consider the case that $x \leq n / 2$ in the following; the case $x > n/2$ can be handled in the same manner.
When $x \leq n/2$, we have $C(p,x) \leq \frac{x}{n} + c\sqrt{x}$.
By Lemma~\ref{l:comb_drop_vertex},
\begin{align*}
C(Ap, x)
&\leq \big( \frac{a - b}{a - b^2} \big) \cdot C(p, b x) + \big( \frac{b - b^2}{a - b^2} \big) \cdot C(p, \frac{a x}b) \\
&\leq \big( \frac{a - b}{a - b^2} \big) \cdot (\frac{x}{n} + c\sqrt{bx}) + \big( \frac{b - b^2}{a - b^2} \big) \cdot (\frac{x}{n} + c\sqrt{\frac{ax}{b}})\\  
&= \frac{x}{n} + c \sqrt{x} \Big( \big( \frac{a - b}{a - b^2} \big) \cdot \sqrt{b} + \big( \frac{b - b^2}{a - b^2} \big) \cdot \sqrt{\frac{a}{b}} \Big)\\  
&= \frac xn + c \sqrt x \cdot \Big( \frac1{a - b^2} \big( (a - b) + (1 - b) \sqrt a \big) \sqrt b \Big) \\
&= \frac xn + c \sqrt x \cdot \big( \frac1{a - b^2} (\sqrt a - b) (1 + \sqrt a) \sqrt b \big) \\
&= \frac xn + c \sqrt x \cdot \big( \frac{(1 + \sqrt a) \sqrt b}{\sqrt a + b} \big).
\end{align*}

Now note that,
\[
1 - \frac{(1 + \sqrt a) \sqrt b}{\sqrt a + b}
= \frac{\sqrt a + b - \sqrt b - \sqrt{ab}}{\sqrt a + b}
= \frac{(\sqrt a - \sqrt b)(1 - \sqrt b)}{\sqrt a + b}.
\]
The above argument shows that $C(Ap,x) \leq \frac{x}{n} + c \sqrt{x} (1-(\sqrt a-\sqrt b)(1-\sqrt b) / (\sqrt a + b))$.
Apply the same argument inductively (with different $c$ in each iteration) gives the lemma.
\end{proof}

In particular, for any probability distribution $p$, we have the following generalization of~(\ref{e:converge}):
\[
C(A^tp, x) 
\leq \frac{x}{n} + \sqrt{x} \cdot (1 - \frac{(\sqrt a - \sqrt b) (1 - \sqrt b)}{\sqrt a + b})^t.
\]

\subsubsection*{Evolving Sets}

We prove Theorem~\ref{t:vertex} about the gauge in the evolving set process.

\begin{proof}[Proof of Theorem~\ref{t:vertex}]
By Claim~\ref{c:MP}, for any $t \in [0,1]$, the lower area below threshold $t$ is
\[
t \cdot \expe[|\tilde S|~|~U \leq t] 
= \sum_{i \in V} \min\{t, d_S(i)\} 
= |S| - \sum_{i \in V} \max\{d_S(i) - t, 0\},
\]
and the upper area above threshold $t$ is
\[
(1-t) \cdot \expe[|\tilde S|~|~U \geq t] 
= \expe[\tilde S] - t \cdot \expe[|\tilde S|~|~U \leq t]
= \sum_{i \in V} \max\{d_S(i)-t,0\}.
\]
By the same argument in Lemma~\ref{l:comb_drop_vertex}, 
using the assumption $C(d_S, a |S|) \leq b |S|$ 
and setting $t = (b - b^2) / (a - b^2)$, 
the upper area above threshold $t$ is 
\[
x := \sum_{i \in V} \max\{d_S(i) - t, 0\} \leq \frac{b (a - b)}{a - b^2} |S|
\] 
as stated in (\ref{e:area}).
Therefore,
\begin{align*}
\psi(S)
&= 1 - \expe[\sqrt{|\tilde S|/|S|}] \\
&= 1 - t \cdot \expe[\sqrt{|\tilde S|/|S|}~|~U \leq t] - (1 - t) \cdot \expe[\sqrt{|\tilde S|/|S|}~|~U \geq t] \\
&\geq 1 - t \sqrt{\expe[|\tilde S|/|S|~|~U \leq t]} - (1 - t) \sqrt{\expe[|\tilde S|/|S|~|~U \geq t]} \\
&= 1 - t \sqrt{\frac{1}{|S|} \frac{|S| - x}{t} } - (1 - t) \sqrt{\frac{1}{|S|} \frac{x}{1-t} } \\
&\geq 1 - t \sqrt{\frac{1}{t|S|} (|S| - \frac{b (a - b)}{a - b^2}|S| )} - (1 - t) \sqrt{\frac{1}{(1-t)|S|} \frac{b (a - b)}{a - b^2} |S| },
\end{align*}
where the first inequality is by the concavity of the square root function,
and the second inequality is by the following fact: Suppose $f$ is a concave function and $c_1 \geq c_2 \geq c_3 \geq c_4$ satisfy $t c_1 + (1 - t) c_4 = t c_2 + (1 - t) c_3$ for some $t \in [0, 1]$, then $t f(c_1) + (1 - t) f(c_4) \leq t f(c_2) + (1 - t) f(c_3)$.
Note that
\[
\frac{1}{t|S|} (|S| - \frac{b (a - b)}{a - b^2} |S|)
= \frac{a - b^2}{b - b^2} (1 - \frac{b (a - b)}{a - b^2})
= \frac{a - b^2 - b (a - b)}{b - b^2}
= \frac ab,
\]
and
\[
\frac{1}{(1-t)|S|} \frac{b (a - b)}{a - b^2} |S|
= \frac{a - b^2}{a - b} \frac{b (a - b)}{a - b^2}
= b.
\]
Hence,
\[
\psi(S)
\geq 1 - t \sqrt{\frac ab} - (1 - t) \sqrt b
\geq \frac{(\sqrt a - \sqrt b) (1 - \sqrt b)}{\sqrt a + b},
\]
where the last inequality follows from the calculations in Lemma~\ref{l:lovasz}
(starting from the third line in the first block of calculations).
We put $a = 1 + \phi^V(S)$ and $b = 1 - \frac{\phi(S)}{2}$.
Note that since $\phi^V(S) \le 1$ by definition, we have $\sqrt a \ge 1 + \phi^V(S) / 3$.
This is the only place we need to assume $\phi^V(S) \le 1$.
On the other hand $\sqrt b \le 1 - \phi(S) / 2$.
Therefore we have
\[
\psi(S) 
\geq \frac{(\frac13 \phi^V(S) + \frac12 \phi(S)) \frac12 \phi(S)}3
\geq \frac{\phi^V(S) \phi(S)}{18}
= \frac{\Psi(S)}{18}.
\]
\end{proof}

\subsubsection*{Local Graph Partitioning}

We obtain Corollary~\ref{c:approx} about the performance of the evolving set algorithm in~\cite{AP,OT}.
In Lemma~5.2 of~\cite{OT}, Oveis Gharan and Trevisan actually showed that the (volume biased) evolving set process will return a set $S$ with $\psi(S) = O(\phi(S^*)/\eps)$, and they used the fact that $\psi(S) = \Omega(\phi(S)^2)$ to get Theorem~\ref{t:approx}.
Now, with Theorem~\ref{t:vertex}, we can replace $\phi(S)^2$ by $\Psi(S)$ and obtain Corollary~\ref{c:approx}.

\subsection{General Graphs} \label{s:general}

Our results generalize to non-regular undirected graphs, with appropriate changes in various definitions.

{\bf Expansion:}
For a general undirected graph $G$, we use $\vol(S) := \sum_{i \in S} \deg(i)$ to denote the volume of a subset $S$.
It is the non-regular analog of the size $|S|$.
The conductance of a set $S \subseteq V$ and the conductance of the graph are defined as
\[
\phi(S) := \frac{w(S, V - S)}{\vol(S)}
\text{\quad and \quad}
\phi(G) := \min_{S : \vol(S) \le \vol(V) / 2} \phi(S).
\]
Ideally, the analog of the combinatorial gap would be 
\[\varphi(G) := \min_{S, T : \vol(S) = \vol(T) \le \vol(V) / 2} 1 - \frac{w(S, T)}{\vol(S)}.\]
However, this definition may not say much since it can happen that any two different subsets have different volume.
In order to handle this situation, we revise the definition and allow $S$ and $T$ to be fractional.
Let $\vec d$ be the degree vector of $G$.
We define the combintarial gap as
\[
\varphi(G)
:= \min_{\chi_S \in [0, 1]^V, \chi_T \in [0, 1]^V : \inner{\chi_S,\vec d} = \inner{\chi_T, \vec d} \le \vol(V) / 2} 1 - \frac{\inner{\chi_S, A \chi_T}}{\inner{\chi_S, \vec d}}.
\]

{\bf Lov\'asz Simonovits curve:}
In general graphs, the Lov\'asz Simonovits curve $C(p) : \vol(V) \to \mathbb R$ is defined as
\[
C(p, x) := \max_{c \in [0, 1]^n : \inner{ c, \vec d} = x} \inner{c, p}.
\]
Suppose the vertices are sorted so that $p(1) \ge p(2) \ge \dots \ge p(n)$.
The extreme points of the curve $C(p)$ is $\sum_{j = 1}^i \deg(j)$ for $j = 0, \dots n$.

{\bf Bar chart:}
We sort the vertices so that $d_S(i) / \deg(i)$ is decreasing.
We should view the bar chart so that each bar has width $\deg(i)$ and height $d_S(i) / \deg(i)$.
So the total width is $\vol(V)$.
Same as before, we put a threshold $1 / 2$ (or choosing another threshold $t\in [0, 1]$ in the proof for vertex expansion) and consider the upper area $x$ and lower area $y$, and show that
\[
(Ap)(S) \le \frac12 (C(p, 2 x) + C(p, 2 y)).
\]
With this figure in mind, the proofs for the extended results are essentially the same as the original proofs.

{\bf Vertex expansion:}
The robust vertex expansion is defined as follows.
Let $q(i) = d_S(i) / \deg(i)$, and
\[
N_{1 / 2}(S) := \min \{ x ~|~ C(q, \vol(S) + x) - C(q, \vol(S)) \ge \frac{w(S, \bar S)}2 \}.
\]
Then $\phi^V(S) := \min \{ N_{1 / 2}(S) / \vol(S), 1 \}$, and $\phi^V(G) := \min_{S : \vol(S) \le \vol(V) / 2} \phi^V(S)$.

\subsubsection*{Restating the Theorems for General Graphs}

In the following, we restate our results on general graphs without proofs.
Lemma~\ref{l:comb_drop} becomes
\begin{lemma}
For any extreme point $x \le \vol(V) / 2$,
\[
C(Ap, x)
\le \frac12 (C(p, x (1 - \varphi(G))) + C(p, x (1 + \varphi(G)))).
\]
\end{lemma}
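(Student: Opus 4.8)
The plan is to mirror exactly the three-step argument used for regular graphs in Section~\ref{s:cgap}, replacing cardinalities $|S|$ by volumes $\vol(S)$ throughout and using the bar chart with bars of width $\deg(i)$ and height $d_S(i)/\deg(i)$ as described in the ``Bar chart'' paragraph. The final statement is the general-graph analog of Lemma~\ref{l:comb_drop}, so I would prove it by combining a general-graph version of Lemma~\ref{l:drop_by_upper_area} (a concavity/telescoping bound giving $(Ap)(S)\leq \frac12(C(p,2x)+C(p,2(\vol(S)-x)))$, where $x$ is the upper area above threshold $1/2$) with a general-graph version of Lemma~\ref{l:upper_area_bound} (the bound $x\leq \frac12(1-\varphi(G))\vol(S)$), and then invoking concavity of $C$ together with the elementary inequality $f(a)+f(d)\leq f(b)+f(c)$ for $a\leq b\leq c\leq d$, $a+d=b+c$, just as in the proof of Lemma~\ref{l:comb_drop}.

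For the drop-by-upper-area step, I would write $(Ap)(S)=\sum_i d_S(i)p(i)$ and perform the telescoping sum after rescaling: setting $q(i)=d_S(i)/\deg(i)$ and summing over the sorted bars, the increments have total widths $\sum_{j\le i}\deg(j)$, which are precisely the extreme points of $C(p)$, so $\sum_i d_S(i)p(i)=\sum_i \deg(i)q(i)p(i)$ can be bounded by $\sum (q(i)-q(i+1))\cdot(\text{width-weighted partial sum})\leq \sum(q(i)-q(i+1))C(p,\cdot)$. Splitting at the threshold index $k$ with $q(k)>1/2\geq q(k+1)$ and applying Jensen to the concave $C(p)$ on each side (each side carrying total coefficient mass $1/2$ after accounting for the boundary values $q(0)=1$, $q(n+1)=0$ weighted by their widths) yields the claimed bound, with the upper-area coordinate equal to $x=\sum_i \deg(i)\max\{q(i)-\frac12,0\}=\sum_i\max\{d_S(i)-\frac12\deg(i),0\}$.

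For the upper-area bound, I would split into the cases $k\leq $ (the index where the cumulative width first exceeds $\vol(S)$) and the complementary case, exactly as in Lemma~\ref{l:upper_area_bound}, using the fractional definition of $\varphi(G)$: taking $\chi_T$ to be the fractional indicator selecting the top bars up to total degree $\vol(S)$, the constraint $\inner{\chi_S,A\chi_T}\leq(1-\varphi(G))\vol(S)$ plays the role that $w(S,T)\leq(1-\varphi(G))|S|$ played before. The first case follows from $q(i)-\frac12\leq\frac12 q(i)$ together with $\sum_i\deg(i)q(i)\chi_T(i)\leq(1-\varphi(G))\vol(S)$; the second case uses that $q$ restricted to the tail is bounded by $1-\varphi(G)$, giving $q(i)-\frac12\leq\frac12(1-\varphi(G))q(i)$ there, and then the same chain of inequalities as in the regular case. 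I expect the main obstacle to be purely bookkeeping: making the fractional/continuous version of the ``top bars of total volume $\vol(S)$'' argument rigorous when no integral subset has volume exactly $\vol(S)$, which is exactly why the fractional definition of $\varphi(G)$ was introduced; once that is set up, the inequalities are identical to Section~\ref{s:cgap} and the statement follows as a direct transcription.
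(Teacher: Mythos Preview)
Your proposal is correct and follows exactly the approach the paper indicates: the paper does not give a separate proof of this general-graph lemma but explicitly says the proofs ``are essentially the same as the original proofs'' once the bar chart is reinterpreted with width $\deg(i)$ and height $d_S(i)/\deg(i)$, and it restates the general-graph analogs of Lemma~\ref{l:drop_by_upper_area} and Lemma~\ref{l:upper_area_bound} that you plan to prove and combine. Your identification of the one genuine bookkeeping issue---using the fractional $\chi_T$ because no integral $T$ need have volume exactly $\vol(S)$---is precisely the reason the paper introduced the fractional definition of $\varphi(G)$, so there is nothing missing.
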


Lemma~\ref{l:drop_by_upper_area} becomes
\begin{lemma}
For any subset $S \subseteq V$ with $\vol(S) \le \vol(V) / 2$, let
\[
x
= \sum_{i \in V} \deg(i) \cdot \max \{ \frac{d_S(i)}{\deg(i)} - \frac12, 0 \},
\]
then
\[
(Ap)(S) \le \frac12 (C(p, 2 x), C(p, 2 (\vol(V) - x))),
\]
for any random walk matrix $A$ and any vector $p \in \mathbb R^n$.
\end{lemma}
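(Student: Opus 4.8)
The plan is to follow the proof of Lemma~\ref{l:drop_by_upper_area} essentially verbatim, reinterpreting the bar chart as in the \textbf{Bar chart} paragraph above: the $i$-th bar has width $\deg(i)$ and height $q(i) := d_S(i)/\deg(i)\in[0,1]$, so its total area is $\sum_{i\in V}\deg(i)q(i) = \sum_{i\in V}d_S(i) = \vol(S)$, and $x$ is the area strictly above the threshold $\tfrac12$. (I read the displayed right-hand side as the typo-corrected $\tfrac12\big(C(p,2x)+C(p,2(\vol(S)-x))\big)$, the $|S|\mapsto\vol(S)$ analogue of Lemma~\ref{l:drop_by_upper_area}.) Sort the vertices so that $q(1)\ge q(2)\ge\cdots\ge q(n)$, put boundary values $q(0)=1$, $q(n+1)=0$, $p(0)=0$, and write $W_i:=\sum_{j=1}^i\deg(j)$ for the cumulative width, so $W_0=0$ and $W_n=\vol(V)$. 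As in the regular case, where $(Ap)(S)=\sum_i d_S(i)p(i)$, one checks that for a random walk matrix $A$ one has $(Ap)(S)=\sum_{i\in V} d_S(i)\,p(i)/\deg(i)=\sum_{i\in V}q(i)p(i)$.

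Next I would run the same telescoping:
\[
(Ap)(S)=\sum_{i=0}^{n}\big(q(i)-q(i+1)\big)\sum_{j=1}^{i}p(j)\ \le\ \sum_{i=0}^{n}\big(q(i)-q(i+1)\big)\,C(p,W_i),
\]
where the inequality holds because $\chi_{\{1,\dots,i\}}$ is feasible in the maximisation defining $C(p,W_i)$; note that in a general graph the $q$-ordering need not coincide with any ordering making this an equality, but the one-sided bound is all we use. Then split the coefficients at the index $k$ with $q(k)>\tfrac12\ge q(k+1)$ into two blocks, each of total mass $\tfrac12$ (namely $\sum_{i=0}^{k-1}(q(i)-q(i+1))+(q(k)-\tfrac12)=\tfrac12$ and $(\tfrac12-q(k+1))+\sum_{i=k+1}^{n}(q(i)-q(i+1))=\tfrac12$), and apply Jensen's inequality to each block with the concave function $C(p,\cdot)$. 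This yields $(Ap)(S)\le\tfrac12 C(p,2A_1)+\tfrac12 C(p,2A_2)$, where $A_1=\sum_{i=0}^{k-1}(q(i)-q(i+1))W_i+(q(k)-\tfrac12)W_k$ and $A_2=(\tfrac12-q(k+1))W_k+\sum_{i=k+1}^{n}(q(i)-q(i+1))W_i$.

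Finally I would identify $A_1$ and $A_2$. An Abel summation using $W_i-W_{i-1}=\deg(i)$ and $W_0=0$ collapses $A_1$ to $\sum_{i=1}^k\big(q(i)-\tfrac12\big)\deg(i)=\sum_{i\in V}\deg(i)\max\{q(i)-\tfrac12,0\}=x$; and since the full telescoped abscissa $\sum_{i=0}^{n}(q(i)-q(i+1))W_i$ equals $\sum_{i=1}^{n}q(i)\deg(i)=\vol(S)$, we get $A_2=\vol(S)-x$. This gives $(Ap)(S)\le\tfrac12\big(C(p,2x)+C(p,2(\vol(S)-x))\big)$, and since $\vol(S)\le\vol(V)/2$ and $0\le x\le\vol(S)$, both arguments lie in $[0,\vol(V)]$, so the bound is meaningful. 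The only step that needs genuine care — the rest being a mechanical translation of the regular-graph argument — is this last bookkeeping: one must consistently replace the old abscissa ``$i$'' by the cumulative degree $W_i$ and verify that the Abel summation still outputs \emph{exactly} the degree-weighted area $x$, rather than $x$ plus a degree-dependent error term, which is precisely what makes the threshold-$\tfrac12$ split produce the clean symmetric pair $2x$ and $2(\vol(S)-x)$.
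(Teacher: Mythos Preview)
Your proposal is correct and follows exactly the approach the paper intends: Section~\ref{s:general} states that ``the proofs for the extended results are essentially the same as the original proofs'' once one reinterprets the bar chart with width $\deg(i)$ and height $d_S(i)/\deg(i)$, and you have carried out precisely that translation of the proof of Lemma~\ref{l:drop_by_upper_area}, including the correct identification of the abscissae as cumulative degrees $W_i$ and the Abel-summation check that $A_1=x$ and $A_1+A_2=\vol(S)$. Your reading of the displayed conclusion as $\tfrac12\big(C(p,2x)+C(p,2(\vol(S)-x))\big)$ is also the right fix for the evident typo in the restated lemma.
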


Lemma~\ref{l:upper_area_bound} becomes
\begin{lemma}
For any subset $S \subseteq V$ with $\vol(S) \le \vol(V) / 2$, let $x = \sum_{i \in V} \deg(i) \max \{ d_S(i) / \deg(i) - 1 / 2, 0 \}$, then
\[
x \le \frac12 (1 - \varphi(G)) \vol(S).
\]
\end{lemma}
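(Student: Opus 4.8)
The statement to prove is the general-graph version of Lemma~\ref{l:upper_area_bound}: for any $S\subseteq V$ with $\vol(S)\le\vol(V)/2$, writing $x=\sum_{i\in V}\deg(i)\max\{d_S(i)/\deg(i)-1/2,0\}$, we have $x\le\frac12(1-\varphi(G))\vol(S)$.

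The plan is to mimic the regular-graph proof almost verbatim, working with the ``normalized'' bar chart in which bar $i$ has width $\deg(i)$ and height $q(i):=d_S(i)/\deg(i)\in[0,1]$, and the vertices are sorted so that $q(1)\ge q(2)\ge\dots\ge q(n)$. First I would record the total-area identity $\sum_{i\in V}\deg(i)q(i)=\sum_{i\in V}d_S(i)=\sum_{i\in V}w(i,S)=w(S,V)=\vol(S)$, the analog of~(\ref{e:sum}). Next, let $\chi_T\in[0,1]^V$ be the fractional set that picks up the bars of largest $q$-value up to total width $\vol(S)$ (i.e.\ $\chi_T(i)=1$ for $i<k_0$, a fractional value at the threshold index $k_0$ where the cumulative width first reaches $\vol(S)$, and $0$ afterwards), so that $\inner{\chi_T,\vec d}=\vol(S)$ and $\inner{\chi_S,A\chi_T}=\sum_i \chi_T(i)\,d_S(i)=\sum_i \chi_T(i)\deg(i)q(i)$ equals the cumulative sum of the largest $q$-weighted mass of width $\vol(S)$. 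By the general-graph definition of $\varphi(G)$ applied to this pair $(\chi_S,\chi_T)$ we get $\sum_i\chi_T(i)\deg(i)q(i)\le(1-\varphi(G))\vol(S)$, i.e.\ $C(q,\vol(S))\le(1-\varphi(G))\vol(S)$ in the curve notation, and also the pointwise bound that $q(i)\le 1-\varphi(G)$ whenever the bar $i$ lies entirely beyond cumulative width $\vol(S)$ (otherwise $\chi_T$ could be shifted to violate the definition).

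With these two facts in hand I would split into the same two cases as the regular proof, according to whether the threshold index $k$ (where $q$ crosses $1/2$) falls within cumulative width $\vol(S)$ or beyond it. In the first case, $x=\sum_{i\le k}\deg(i)(q(i)-\tfrac12)\le\tfrac12\sum_{i\le k}\deg(i)q(i)\le\tfrac12\,C(q,\vol(S))\le\tfrac12(1-\varphi(G))\vol(S)$, using $q(i)\le 1$ for the first inequality and monotonicity of the curve for the second. In the second case, every bar beyond width $\vol(S)$ has $q(i)\le 1-\varphi(G)$, so $q(i)-\tfrac12\le\frac{1-\varphi(G)}{2}q(i)$ there (same algebra as in the regular proof, using $\varphi(G)\le 1/2$ in the relevant range); then I split the area $x$ at width $\vol(S)$, bound the part up to $\vol(S)$ by $C(q,\vol(S))-\tfrac12\vol(S)\le(1-\varphi(G))\vol(S)-\tfrac12\vol(S)$ and the part beyond by $\frac{1-\varphi(G)}{2}$ times $(\vol(S)-C(q,\vol(S)))$, and combine exactly as in the displayed chain in the proof of Lemma~\ref{l:upper_area_bound} to get $x\le\frac12(1-\varphi(G))\vol(S)$.

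The only genuinely new point — and the part I expect to need the most care — is the fractional boundary handling: because in a non-regular graph the cumulative widths $\sum_{j\le i}\deg(j)$ need not hit $\vol(S)$ exactly, the ``threshold'' set $\chi_T$ must be allowed to be fractional at one coordinate, and one must check that the definition of $\varphi(G)$ (which is stated with fractional $\chi_S,\chi_T$ precisely for this reason) still applies and still yields both $C(q,\vol(S))\le(1-\varphi(G))\vol(S)$ and the pointwise bound $q(i)\le 1-\varphi(G)$ for bars strictly beyond width $\vol(S)$. Once the curve notation $C(q,\cdot)$ is set up to absorb this fractional bookkeeping, the two-case estimate is identical to the regular case and goes through with no further obstacle.
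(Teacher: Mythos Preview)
Your proposal is correct and takes exactly the approach the paper intends. The paper does not give an explicit proof for this general-graph lemma but says that once the bar chart is redrawn with bar widths $\deg(i)$ and heights $q(i)=d_S(i)/\deg(i)$, ``the proofs for the extended results are essentially the same as the original proofs''; your write-up carries this out faithfully, and your identification of the fractional $\chi_T$ (which the fractional definition of $\varphi(G)$ in Section~\ref{s:general} is introduced precisely to accommodate) as the one place needing extra care is on target.
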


Lemma~\ref{l:comb_drop_vertex} becomes
\begin{lemma}
Let $q(i) := d_S(i) / \deg(i)$.
Assume $C(q, a \vol(S)) \le b \vol(S)$ for any $S$ with $\vol(S) \le \vol(V) / 2$ for some $a > 1$ and $b < 1$, then for any $p \in \mathbb R^n$, we have
\[
	C(p, \vol(S)) \le (\frac{a - b}{a - b^2}) \cdot C(p, b \vol(S)) + (\frac{b - b^2}{a - b^2}) \cdot C(p, \frac{a \vol(S)}b).
\]
\end{lemma}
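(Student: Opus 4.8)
The plan is to follow exactly the argument used in the regular case for Lemma~\ref{l:comb_drop_vertex}, tracking the degree weights through the bar chart. First I would set up the horizontal telescoping sum for $(Ap)(S)$ using the general-graph Lov\'asz--Simonovits curve: sort vertices so that $q(i) = d_S(i)/\deg(i)$ is decreasing, view each bar as having width $\deg(i)$ and height $q(i)$, and write $(Ap)(S) = \sum_i d_S(i)\, p(i)$ as a telescoping sum over the decreasing sequence $q(0) = 1 \ge q(1) \ge \dots \ge q(n) \ge q(n+1) = 0$, where the coefficient multiplying $C(p,\cdot)$ at each step is the accumulated width. Fixing a threshold $t \in (0,1)$ and splitting the sum at the index $k$ with $q(k) > t \ge q(k+1)$, the concavity (Jensen) argument from Lemma~\ref{l:drop_by_upper_area} gives, with $x$ denoting the upper area $\sum_i \deg(i)\max\{q(i)-t,0\}$ and $\vol(S)-x$ the lower area, the bound $(Ap)(S) \le (1-t)\, C(p, x/(1-t)) + t\, C(p, (\vol(S)-x)/t)$; maximizing over $S'$ of volume $\vol(S)$ upgrades the left side to $C(p,\vol(S))$.

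Next I would bound the upper area $x$ in terms of $a,b$ exactly as in the regular proof, using the hypothesis $C(q, a\vol(S)) \le b\vol(S)$. Split into the case where the threshold crossing point $k$ (measured in accumulated width) is at most $a\vol(S)$ and the case where it exceeds $a\vol(S)$. In the first case $x = \sum_{\text{width}\le k}\deg(i)(q(i)-t) \le (1-t)\sum \deg(i) q(i) = (1-t)\,C(q,\text{width } k) \le (1-t) b\vol(S)$. In the second case, monotonicity of $q$ forces $q(i) \le b/a$ past accumulated width $a\vol(S)$ (otherwise $C(q,a\vol(S)) > b\vol(S)$), giving $q(i)-t \le (1 - at/b)\,q(i)$ there, and a short computation using $\sum_i \deg(i) q(i) = \vol(S)$ yields $x \le \vol(S)(1 - at/b)$. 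Balancing $b\vol(S)(1-t)$ against $\vol(S)(1-at/b)$ picks $t = (b-b^2)/(a-b^2)$ and hence $x \le \frac{b(a-b)}{a-b^2}\vol(S)$, the analog of~(\ref{e:area}). Substituting this value of $t$ and this bound on $x$ into the concavity inequality and simplifying the arguments of $C(p,\cdot)$ gives $b\vol(S)$ and $a\vol(S)/b$, reproducing the claimed inequality.

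The main obstacle — really the only nonroutine point — is making sure the bar-chart telescoping and the Jensen step go through cleanly when the bars have unequal widths $\deg(i)$ and the curve $C(p)$ has its breakpoints at the partial sums $\sum_{j\le i}\deg(j)$ rather than at the integers. One must check that the arguments of $C(p,\cdot)$ appearing after Jensen are exactly the upper and lower areas (so that the widths, not the unit counts, are what accumulate), and that $\sum_i \deg(i) q(i) = \sum_i d_S(i) = \vol(S)$ plays the role that~(\ref{e:sum}) played in the regular case. I would state this once at the level of the general-graph bar chart described in the ``Bar chart'' paragraph and then observe that every subsequent manipulation is formally identical to the proof of Lemma~\ref{l:comb_drop_vertex}, with $|S|$ replaced by $\vol(S)$ and $d_S(i)$ by $q(i)$ inside the curve. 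Since the excerpt explicitly restates this lemma ``without proofs,'' the proposal is simply to record that this substitution is faithful and no new idea is needed.
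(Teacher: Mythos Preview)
Your proposal is correct and is precisely the approach the paper intends: the paper explicitly restates this lemma ``without proofs'' and remarks that with the weighted bar chart (bars of width $\deg(i)$ and height $q(i)$) in mind, ``the proofs for the extended results are essentially the same as the original proofs.'' Your plan faithfully reproduces the proof of Lemma~\ref{l:comb_drop_vertex} under the substitutions $|S|\mapsto \vol(S)$ and $d_S(i)\mapsto q(i)$, and you correctly identify the one bookkeeping point (that accumulated widths, not vertex counts, appear in the arguments of $C(p,\cdot)$, and that $\sum_i \deg(i)q(i)=\vol(S)$ replaces~(\ref{e:sum})).
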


With these definitions and lemmas in place, Theorem~\ref{t:cgap} and Theorem~\ref{t:vertex} hold as stated in the introduction.

\section{Limitations} \label{s:limitations}

We prove Theorem~\ref{t:hypercube} that provides a hard small-set expansion instance for the evolving set process studied in~\cite{AP,OT}.
As mentioned in the introduction, 
it will be a noisy hypercube $H$ over alphabet size $k = 1/\delta$.
Formally, $H$ is a graph on $k^d$ vertices, representing all strings of length
$d$ over alphabet $\domain k$.
For two vertices $x,y$, 
the edge weight $w(x,y)$ is set to be the probability to go from vertex $x$ to vertex $y$ in one step of a random walk, 
where each symbol of $x$ is independently rerandomized with
probability $\eps$:
For each $i\in [d]$, with probability $1-\eps$, set $y_i = x_i$, otherwise
$y_i$ is sampled uniformly at random from $\domain k$.
Note that $H$ is $1$-regular.
It is easy to see that $H$ has a small sparse cut.

\begin{claim} \label{c:dimension}
There is a set $S$ with expansion at most $\eps$ and $|S|=\delta n$ where $\delta=1/k$.
\end{claim}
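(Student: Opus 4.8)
The plan is to exhibit the set explicitly as a ``dimension cut'': fix the first coordinate and let $S$ consist of all strings whose first symbol equals a fixed value, say $S = \{x \in [k]^d : x_1 = 1\}$. Then $|S| = k^{d-1} = n/k = \delta n$, which gives the size claim immediately, so the only work is to compute the expansion $\phi(S) = w(S, V-S)/|S|$.

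First I would observe that since $H$ is $1$-regular, $\phi(S) = 1 - w(S,S)/|S| = 1 - \Pr[y \in S \mid x \in S]$ where $y$ is the one-step random walk from a uniformly random $x \in S$. Because the rerandomization acts independently on each coordinate, the event $y \in S$ depends only on the first coordinate: $y_1 = x_1 = 1$ happens iff coordinate $1$ is not rerandomized (probability $1-\eps$) or it is rerandomized and resampled to $1$ (probability $\eps \cdot (1/k)$). Hence $\Pr[y_1 = 1 \mid x_1 = 1] = (1-\eps) + \eps/k \geq 1 - \eps$, so $w(S,S)/|S| \geq 1-\eps$ and therefore $\phi(S) = 1 - w(S,S)/|S| \leq \eps$. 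This is really a one-line computation once the coordinate-independence is spelled out.

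The proof is essentially routine; there is no hard step. The only thing to be slightly careful about is the direction of the inequality and the definition of edge weights: $w(x,y)$ is defined as the one-step transition probability, and one should check that this is symmetric (it is, since rerandomizing $x$ to $y$ and rerandomizing $y$ to $x$ are symmetric events under the uniform resampling), so that $w(S,V-S)$ is well-defined as an undirected quantity and $\phi(S)$ matches the definition used earlier. I would also remark that $\phi_\delta(H) \leq \phi(S) \leq \eps$ follows immediately since $|S| = \delta n$, which is the form actually needed in Theorem~\ref{t:hypercube}.
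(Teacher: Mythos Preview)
Your proof is correct and takes essentially the same approach as the paper: both exhibit the coordinate cut $S = \{x : x_1 = \text{const}\}$ and observe that its expansion is at most $\eps$. The paper's proof is a single sentence that asserts the expansion bound without computation, whereas you spell out the one-line probability calculation $\Pr[y_1 = x_1] = (1-\eps) + \eps/k \geq 1-\eps$; this is exactly the intended justification.
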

\begin{proof}
Indeed, the coordinate cut $S = \set{ x\in \domain k^d \mid x_1=0 }$ has size $\delta n$ and expansion at most $\eps$.
\end{proof}

We will show that all the sets explored by the evolving set process have expansion close to one.
First, we argue that the evolving set process will only explore the Hamming balls of the noisy hypercube in Lemma~\ref{l:level}.
Then, we will show that the expansion of all Hamming balls of size $O(\delta n)$ is close to one in Lemma~\ref{l:hamming}.

The evolving set process starts from a singleton set on $H$.
By symmetry, we may assume this set is $\set{0^d}$.
We now show that the evolving set process only explores sets that are Hamming balls $B(r)$ (around $0^d$),
where $B(r)$ denotes all strings of Hamming weight at most $r$:
\[ B(r) := \set{x\in \domain k^d \mid \abs{x} \leq r} {\rm~where~}
\abs{x} := \card{\set{i\in \domain d\mid x_i \neq 0}}. \]

Indeed, the initial set $\set{0^d}$ is the Hamming ball $B(0)$.
The following lemma implies that, if the current set is a Hamming
ball $B(r)$, then so is the next set, and thus by induction the evolving set process will only explore Hamming balls.

\begin{lemma} \label{l:level}
Suppose $\eps \leq 1/2$.
For any $r \geq 0$, any $x,y\in \domain k^d$, if $\abs{x} \leq \abs{y}$, then
\[w(x, B(r)) \geq w(y, B(r)).\]
(It follows that if $S$ is a Hamming ball, then $y \in \tilde S$ implies that $x \in \tilde S$, and thus $\tilde S$ is also a Hamming ball.)
\end{lemma}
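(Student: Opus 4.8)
The claim to prove is: for the $k$-ary $\eps$-noisy hypercube with $\eps \le 1/2$, if $\abs{x} \le \abs{y}$ then $w(x, B(r)) \ge w(y, B(r))$ for every $r$. The plan is to exploit the product structure of the noise operator together with a monotone coupling argument applied coordinate by coordinate. Since the noise is applied independently in each coordinate, the distribution of the endpoint $z$ of one random-walk step from a starting vertex $u$ is a product distribution $\bigotimes_{i=1}^d \mu_{u_i}$, where $\mu_0$ is the distribution on $\domain k$ that puts mass $1-\eps+\eps/k$ on $0$ and $\eps/k$ on each nonzero symbol, and $\mu_c$ (for $c \ne 0$) puts mass $1-\eps+\eps/k$ on $c$ and $\eps/k$ elsewhere. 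The key observation is that the event $\{z \in B(r)\}$ depends on $u$ only through the Hamming-weight profile, and in fact only through which coordinates of $u$ are zero: $w(u, B(r)) = \Pr[\,\abs{z} \le r\,]$ where the indicators $\1[z_i \ne 0]$ are independent Bernoulli variables with $\Pr[z_i \ne 0] = \eps(1-1/k)$ if $u_i = 0$ and $\Pr[z_i \ne 0] = 1-\eps+\eps(1-1/k) = 1 - \eps/k$ if $u_i \ne 0$. Note that $\eps(1-1/k) \le 1-\eps/k$ precisely because $\eps \le 1/2$ (indeed $\eps(1-1/k) \le 1-\eps/k \iff \eps \le 1$, actually it holds for all $\eps \le 1$; the bound $\eps \le 1/2$ is what makes $\eps(1-1/k) \le 1/2 \le 1-\eps/k$, which is the cleaner way to see each nonzero coordinate is ``more likely to stay nonzero'').

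First I would set up this reduction precisely: write $w(u, B(r)) = \Pr[\,\sum_{i=1}^d \xi_i^{(u)} \le r\,]$ where $\xi_i^{(u)} \sim \mathrm{Bernoulli}(p_0)$ if $u_i = 0$ and $\xi_i^{(u)} \sim \mathrm{Bernoulli}(p_1)$ if $u_i \ne 0$, with $p_0 = \eps(1-1/k) \le p_1 = 1-\eps/k$, and the $\xi_i$ mutually independent. Second, I would observe that replacing a zero coordinate of $u$ by a nonzero one only changes the corresponding Bernoulli parameter from $p_0$ up to $p_1$, and since $\abs{x} \le \abs{y}$ means $y$ has at least as many nonzero coordinates as $x$, the random variable $\sum_i \xi_i^{(y)}$ stochastically dominates $\sum_i \xi_i^{(x)}$. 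Third, stochastic domination of the sums gives $\Pr[\sum_i \xi_i^{(x)} \le r] \ge \Pr[\sum_i \xi_i^{(y)} \le r]$ for every $r$, which is exactly $w(x, B(r)) \ge w(y, B(r))$.

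For the stochastic domination step I would use the standard monotone coupling: couple each pair $(\xi_i^{(x)}, \xi_i^{(y)})$ using a single uniform $U_i \in [0,1]$, setting $\xi_i^{(x)} = \1[U_i \le \text{(its parameter)}]$ and likewise for $y$; since the parameter for $y$'s $i$-th coordinate is always $\ge$ the parameter for $x$'s $i$-th coordinate (equal when $u_i$ has the same zero/nonzero status in both, larger when $x_i = 0$ but $y_i \ne 0$), we get $\xi_i^{(x)} \le \xi_i^{(y)}$ pointwise, hence $\sum_i \xi_i^{(x)} \le \sum_i \xi_i^{(y)}$ pointwise, hence the inequality between the probabilities. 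The parenthetical consequence stated in the lemma is immediate: if $S = B(r)$ then $\tilde S = \{u : w(u, S) \ge U\}$, and $w(\cdot, B(r))$ being a nonincreasing function of Hamming weight means $\tilde S$ is downward closed under Hamming weight, i.e.\ again a Hamming ball (taking the tie-breaking within a weight level into account, but by symmetry $w(u, B(r))$ depends only on $\abs{u}$, so there is no tie-breaking issue — the threshold set is exactly $\{\abs{u} \le r'\}$ for the appropriate $r'$).

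**Main obstacle.** The only subtle point is being careful that $w(u, B(r))$ genuinely depends on $u$ only through the \emph{number} of nonzero coordinates (not their positions or values), which follows from the symmetry of $\mu_c$ over nonzero symbols and the permutation-invariance of $B(r)$; once that is nailed down, everything reduces to the one-line stochastic-domination coupling above. I would also double-check the direction of the inequality $p_0 \le p_1$, since it is exactly here that the hypothesis $\eps \le 1/2$ enters — if $\eps > 1/2$ a nonzero coordinate could become \emph{less} likely to remain nonzero than a zero coordinate is to become nonzero, and the monotonicity would fail.
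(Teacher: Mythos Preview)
Your argument is correct and takes a cleaner route than the paper. The paper first establishes the equal-weight case $|x|=|y|$ by constructing an explicit Hamming-isometry of $\domain k^d$ sending $x$ to $y$ and fixing $0^d$, then reduces the general case to $|y|=|x|+1$ with $x,y$ differing in a single coordinate, and finishes by conditioning on all other noise coordinates and comparing $\Pr[W_c=0]\ge 1-\eps$ against $\Pr[W_c\neq 0]\le\eps$. Your approach bypasses all of this by observing directly that $w(u,B(r))=\Pr\bigl[\sum_i \xi_i^{(u)}\le r\bigr]$ with independent $\xi_i^{(u)}\sim\mathrm{Bernoulli}(p_0)$ or $\mathrm{Bernoulli}(p_1)$ according to whether $u_i=0$, so the law of the sum depends only on $|u|$, and then invoking stochastic domination of Bernoulli sums. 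This is shorter and more transparent; the paper's permutation construction and case analysis are doing by hand what your reduction to a sum of independent Bernoullis gives for free.

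Two small points to tighten. First, your coordinate-wise coupling as written (``the parameter for $y$'s $i$-th coordinate is always $\ge$ that for $x$'') is only literally true once you have used the symmetry to assume the nonzero positions of $x$ are contained in those of $y$; for arbitrary $x,y$ with $|x|\le|y|$ there can be coordinates where $x_i\neq 0$ and $y_i=0$. You clearly have the right fix in mind (the law depends only on $|u|$, so replace $x,y$ by nested-support representatives, or equivalently phrase the sum as $\mathrm{Bin}(|u|,p_1)+\mathrm{Bin}(d-|u|,p_0)$), but say so explicitly before invoking the coupling. Second, your own computation shows $p_0=\eps(1-1/k)\le 1-\eps/k=p_1$ holds for every $\eps\le 1$, so the monotonicity you prove does \emph{not} actually require $\eps\le 1/2$; your closing remark that ``if $\eps>1/2$ \ldots\ the monotonicity would fail'' is incorrect. (The paper's stated bounds $\Pr[W_c=0]\ge 1-\eps$ and $\Pr[W_c\neq 0]\le\eps$ do use $\eps\le 1/2$, but the sharper values $1-\eps+\eps/k$ versus $\eps/k$ give the inequality for all $\eps\le 1$ there too.) So your argument in fact yields a marginally stronger statement than the lemma asserts.
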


\begin{proof}
Note that $w(x,z)$ depends only on the Hamming distance $\abs{x-z}$
(coordinate-wise subtraction modulo $k$).
We first show via a symmetry argument that
\begin{equation} \label{eq:same-weig}
  w(x,B(r)) = w(y,B(r)) \text{~whenever $\abs{x} = \abs{y}$} .
\end{equation}
To this end, we will construct a permutation $\pi$ on $\domain k^d$ that 
(i) preserves Hamming distances: $\abs{\pi(a) - \pi(b)} = \abs{a - b}$ 
 for all $a,b\in \domain k^d$, 
(ii) $\pi(x) = y$, and 
(iii) $\pi(0^d) = 0^d$.
Assuming this permutation exists, we get that $z\in B(r)$ if and only if
$\pi(z)\in B(r)$, since $\abs{\pi(z) - 0} = \abs{\pi(z) - \pi(0)} = \abs{z -
0}$.
Also, $\abs{x - z} = \abs{\pi(x) - \pi(z)} = \abs{y - \pi(z)}$, thus
\[ w(x,z) = w(y,\pi(z)) . \]
Summing this equality over all $z\in B(r)$, we get (\ref{eq:same-weig}).

We now construct such a permutation $\pi$.
Take any bijection $\sigma$ on $\domain d$ that maps $I := \set{i\mid x_i
\neq 0}$ onto $\set{i\mid y_i \neq 0}$.
For $i\in I$, let $\tau_i$ be the permutation on $\domain k$ that simply
swaps $x_i$ and $y_{\sigma(i)}$.
Then we define $\pi(a) = b$ where $b_i = a_i$ if $i\notin I$, and $b_i =
\tau_i(a_{\sigma(i)})$ otherwise.
It is easy to verify that $\pi$ has all the required properties.

We now deal with the general case $\abs x < \abs y$.
It suffices to prove the lemma assuming $\abs y = \abs x + 1$.
By (\ref{eq:same-weig}), we may assume that $x$ is the indicator vector on a
subset $S = \domain c$ for some $c$, and $y$ is the indicator vector on
$\domain{c+1}$, so that $y$ differs from $x$ only at position $c$.
Picking a random neighbor $Z$ of $x$ is equivalent to picking $W = x-z$ and
setting $Z = x+W$, so the lemma is equivalent to $\Pr_W[x+W \in B(r)]
\geq \Pr_W[y+W \in B(r)]$.
In fact, we will show this inequality conditioned on all values of $W_i$
except $i = c$.
Let $w_{-c}\in \domain k^{\domain d\setminus \set c}$ be a fixing of all
those values.
We will show
\begin{equation}\label{eq:cond}
  \Pr_{W_c}[x+W \in B(r) \mid W_{-c} = w_{-c}] \geq \Pr_{W_c} [y+W \in B(r)
  \mid W_{-c} = w_{-c}] .
\end{equation}
There are two cases.
If $\abs{x_{-c} + w_{-c}} \neq r$, then $x+W$ and $y+W$ are both in $B(r)$ or
both outside of $B(r)$, and therefore (\ref{eq:cond}) holds as an equality.
In the remaining case, the left hand side of (\ref{eq:cond}) is at least
$\Pr[W_c = 0] \geq 1-\eps$,
while the right hand side is at most $\Pr[W_c \neq 0] \leq \eps$, so the
inequality follows by our assumption that $\eps \leq 1/2$.
\end{proof}

We now show that any small Hamming ball has large expansion.
The same result appears earlier in~\cite{CMN}.
We give a proof below, filling in some missing details.
The main idea is to show that the Gaussian noise graph is a small-set expander using a hypercontractiviy inequality, and to use the central limit theorems to translate this result to reason about the Hamming balls in the noisy hypercube graph.
This connection between Gaussian noise graphs and noisy hypercubes was used commonly in showing integrality gap examples for convex relaxations,
and here it is used in showing limitations for random walks based algorithms.

\begin{lemma} \label{l:hamming}
For any $\eps,\eta > 0$, there exists $\delta = \delta_{\eps,\eta}$
(independent of $k$) such that for any sufficiently large $d \geq
d_{\eps,\delta,\eta}$, all Hamming balls of size $\leq \delta n$ has
expansion $1-O(\eta)$.
\end{lemma}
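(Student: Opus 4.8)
The plan is to transfer the problem to the continuous (Gaussian) setting, where hypercontractivity gives small-set expansion, and then pull the conclusion back to Hamming balls via a central limit theorem. First I would set up the Gaussian noise graph: let $\rho = 1-\eps$, and consider the Ornstein--Uhlenbeck operator $U_\rho$ on $L^2(\R^m, \gamma)$ with standard Gaussian measure $\gamma$, for a suitable dimension $m$ (ultimately $m$ will be linked to $d$, or one may first argue in a fixed low dimension and then boost). The key input is the standard fact, proved by hypercontractivity (e.g.~the $(2,q)$-hypercontractive inequality for $U_\rho$), that for every measurable set $G$ with $\gamma(G) = \alpha$ small, the ``Gaussian expansion'' $1 - \frac{\inner{\1_G, U_\rho \1_G}}{\gamma(G)}$ is at least $1 - O(\alpha^{c})$ for some constant $c = c(\eps) > 0$; in particular it exceeds $1 - \eta$ once $\alpha$ is below a threshold $\alpha_{\eps,\eta}$. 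This is where the real ``expansion'' content lives, and I would cite the standard hypercontractive small-set expansion bound for the Gaussian graph rather than reprove it.

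Next I would translate a Hamming ball to (an approximation of) a Gaussian halfspace-type set. Fix the alphabet size $k$ and the noise $\eps$; a Hamming ball $B(r)$ around $0^d$ is the set where $\sum_{i=1}^d \1[x_i \neq 0] \le r$. Under the product measure this is a sum of i.i.d.\ Bernoulli$(1-1/k)$ indicators, so by the (Berry--Esseen) central limit theorem, choosing $r$ so that $|B(r)| = \beta n$ with $\beta$ bounded away from $0$ and $1$, the normalized Hamming weight $\frac{1}{\sqrt d}\sum_i (\1[x_i\neq 0] - (1-1/k))$ converges to a one-dimensional Gaussian; thus $B(r)$ "is" essentially the event $\{Z \le \theta\}$ for a single Gaussian coordinate $Z$ and threshold $\theta = \theta_\beta$ depending only on $\beta$ (not on $k$). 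The noisy-hypercube random walk, restricted to the single statistic ``Hamming weight,'' acts on this Gaussian coordinate exactly as an Ornstein--Uhlenbeck step with parameter $\rho' = 1-\eps$ (one checks the one-step correlation of $\1[x_i \neq 0]$ with $\1[y_i\neq 0]$ equals $1-\eps$ plus a lower-order term, so after normalization the limiting operator on $Z$ is $U_{1-\eps}$). Hence $\frac{w(B(r), B(r))}{|B(r)|} \to \frac{\inner{\1_{\{Z\le\theta\}}, U_{1-\eps}\1_{\{Z\le\theta\}}}}{\gamma(\{Z\le\theta\})}$ as $d\to\infty$, with the rate of convergence quantified by Berry--Esseen and hence uniform in $k$. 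Combining with the Gaussian bound from the previous paragraph, once $\beta \le \delta_{\eps,\eta}$ we get $\phi(B(r)) \ge 1 - O(\eta)$ for all sufficiently large $d \ge d_{\eps,\delta,\eta}$; the constant $\delta_{\eps,\eta}$ comes purely from the Gaussian side and so is indeed independent of $k$, as required.

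The main obstacle, and the step I would spend the most care on, is making the central-limit transfer \emph{quantitative and uniform in $k$ and in the ball radius}: the statement must hold for \emph{all} Hamming balls of size up to $\delta n$ simultaneously, including very small ones where the Gaussian approximation of a binomial is weakest, and the error terms in the CLT must be controlled independently of the alphabet size $k = 1/\delta$ (which is large). I would handle the small-ball regime either by a direct combinatorial argument (a Hamming ball of radius $r \ll d$ has almost all its neighbors at Hamming weight $r \pm \Theta(\sqrt{\eps d})$, which lies outside the ball, giving expansion $1 - o(1)$ directly), reserving the Gaussian/hypercontractivity argument for balls of ``moderate'' relative size where $\beta$ is a genuine constant; then the two regimes together cover all balls of size $\le \delta n$. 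A secondary technical point is verifying that the one-step transition of the Hamming-weight statistic really is (asymptotically) the Ornstein--Uhlenbeck semigroup $U_{1-\eps}$ and not some other reversible kernel --- this is a short computation of the joint law of $(\1[x_i\neq 0], \1[y_i\neq 0])$ under the rerandomization rule --- and that passing to the single ``Hamming weight'' coordinate is legitimate because, by the symmetry established in Lemma~\ref{l:level}, the indicator of a Hamming ball is a function of that coordinate alone.
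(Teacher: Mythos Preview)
Your proposal is correct and follows essentially the same route as the paper: project to the one-dimensional Hamming-weight statistic, invoke the (bivariate) Berry--Esseen CLT to replace $(\abs{x},\abs{y})$ by a pair of standard Gaussians with correlation exactly $1-\eps$, apply Gaussian hypercontractivity to bound the noise stability of a halfspace of measure $\le \delta := \eta^{2/\eps}$, and then treat very small balls by a separate argument. The only difference is in that last step---the paper invokes $k$-ary hypercontractivity on $H$ directly (citing Wolff) rather than your concentration sketch; note also that your description there is slightly off, since for a vertex of weight $w$ the neighbor's weight concentrates around $w + \eps(d-w)$ rather than $r \pm \Theta(\sqrt{\eps d})$, though this upward drift only strengthens your conclusion.
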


\begin{proof}
We will analyze the expansion of a Hamming ball $B(r)$ by relating $B(r)$
to halfspaces $A_{r'} := \set{x\in \R \mid x\leq r'}$ in Gaussian
probability space.

Consider drawing a random edge $(x,y)$ from $H$ according to its weight,
and we would like to analyze the probability that both vertices are in $B(r)$.
The event $x\in B(r)$ is the same as $\abs x = \sum_{i \in \domain d} 
\1(x_i \neq 0) \leq r$.
Since $\abs x$ is a sum of independent random variables and each summand has
bounded third moment, by Berry--Esseen central limit theorem, for large $d$,
the sum is closely approximated by a Gaussian random variable $g$ with the
same mean and variance as $\abs x$.
That is, for all large enough $d \geq d_{\eps,\delta,\eta}$,
\begin{equation} \label{eq:clt}
  \Pr[\abs x \leq r] \approx_{\delta'} \Pr[g \leq r] \text{~for all $r\in
  \R$}
\end{equation}
for some $\delta'$ depending on $\eps,\delta,\eta$ to be
specified later.
Here we write $C \approx_{\delta'} D$ to mean $\abs{C - D} \leq \delta'$.

Moreover, multivariate central limit theorem (e.g.~\cite{Saz68}) implies that
the event $(\abs x \leq r) \wedge (\abs y \leq r)$ has roughly the same
probability as the event $(g \leq r) \wedge (h \leq r)$, 
where the bivariate Gaussian $(g,h)$ has the same
mean and covariance as $(\abs x, \abs y)$.
That is, for large enough $d$,
\begin{equation} \label{eq:bi-clt}
Pr[(\abs x \leq r) \wedge (\abs y \leq r)] \approx_{\delta'} 
Pr[(g \leq r) \wedge (h \leq r)] \text{~for all $r\in \R$}.
\end{equation}
We note that the following calculations do not depend on the dimension $d$ 
other than the CLT approximation errors 
(as we are not concerned about the graph size $n=k^d$), 
so we can choose a very large $d$ at the end 
to make the CLT approximation errors $\delta'$ to be arbitrarily small 
for the proof to go through.

Shift $g$ and $h$ to have zero mean and renormalize them to have unit
variance.
We get $g' = (g - \E[g])/\sqrt{\var[g]}$ from $g$ and similarly $h'$ from
$h$.
Then $g'$ and $h'$ have covariance
\[ \cov(g',h') = \frac{\cov(\abs x, \abs y)}{\sqrt{\var[\abs x] \var[\abs
  y]}} = \frac{\cov(\1(x_1 \neq 0), \1(y_1 \neq 0))}{\sqrt{\var[\1(x_1 \neq
0)] \var[\1(y_1 \neq 0)]}} . \]
We have $\cov(\1(x_1 \neq 0), \1(y_1 \neq 0)) = (1-\eps)\var[\1(x_1 \neq
0)]$, so $\cov(g',h') = 1-\eps$.
Therefore,
\[ \Pr[(g \leq r) \wedge (h \leq r)] = 
   \Pr[(g' \leq r') \wedge (h' \leq r')] , \]
where $r'$ is chosen so that $\Pr[g' \leq r'] = \Pr[g \leq r]$.

For standard Gaussians $g'$ and $h'$ with covariance $1-\eps$, we claim that
\begin{equation}\label{eq:gauss-stable}
  \Pr[h' \leq r' \mid g' \leq r'] \leq \Pr[g' \leq r']^{\eps/2} .
\end{equation}
This inequality follows from Gaussian hypercontractive inequality
(e.g.~\cite[Section~11.1]{O14})
\[ \E_{\text{$(1-\eps)$ correlated $g',h'$}}[\1_{\leq r'}(g')\1_{\leq
r'}(h')] \leq \norm{\1_{\leq r'}}_{2-\eps}^2 , \]
where $\1_{\leq r'}(g') := \1(g' \leq r')$ is the indicator function for the
halfspace $A_{r'}$, and the fact that
\[ \norm{\1_{\leq r'}}_{2-\eps} = \E[\1_{\leq r'}(g')^{2-\eps}]^{1/(2-\eps)}
= \E[\1_{\leq r'}(g')]^{1/(2-\eps)} = \Pr[g' \leq r']^{1/(2-\eps)} . \]

Set $\delta := \eta^{2/\eps}$.
Note that for $r'$ small enough so that $\Pr[g' \leq r'] \leq \delta$,
then (\ref{eq:gauss-stable}) implies that the halfspace 
$A_{r'}$ has expansion 
\[\Pr[h' > r \mid g' \leq r'] = 1 - \Pr[h' \leq r' \mid g' \leq r'] \geq 1 -
\Pr[g' \leq r']^{\eps/2} \geq 1 - \delta^{\eps/2} = 1-\eta.\]
We use (\ref{eq:clt}) and (\ref{eq:bi-clt}) to translate this expansion 
result from Gaussian space to the noisy hypercube.
Let $\delta'' \leq \delta$ be a constant depending on $\eps,\delta,\eta$ to
be specified later.
Any Hamming ball of $H$ of size at least $\delta'' n$ corresponds to a
halfspace of roughly the same Gaussian measure via (\ref{eq:clt}), and has
roughly the same noise stability via (\ref{eq:bi-clt}).
Choosing the CLT approximation error $\delta' := \delta''\eta$, we can
ensure that all Hamming balls $B(r)$ of size between $\delta'' n$ and
$\delta n$ have expansion $\geq 1-\eta$.
Indeed,
\[ \Pr[\text{$x$ and $y \in B(r)$}] \leq \Pr[\text{$g'$ and $h' \in A_{r'}$}]
+ \delta' \]
by (\ref{eq:bi-clt}) and
\[ \delta \geq \Pr[x \in B(r)] \geq \Pr[g' \in A_{r'}]/2 \]
by (\ref{eq:clt}) and our assumption that $\Pr[x\in B(r)] \geq \delta''$, so
\[ 1 - \phi(B(r)) = \frac{\Pr[\text{$x$ and $y \in B(r)$}]}{\Pr[x\in B(r)]}
\leq 2 \frac{\Pr[\text{$g'$ and $h'\in A_{r'}$}] + \delta'}{\Pr[g'\in
A_{r'}]} . \]
We now analyze the right hand side.
We have
\[ \frac{\Pr[\text{$g'$ and $h'\in A_{r'}$}]}{\Pr[g'\in A_{r'}]} \leq
\Pr[g'\in A_{r'}]^{\eps/2} \leq O(\eta) \]
by (\ref{eq:gauss-stable}) and the fact that $\Pr[g'\in A_{r'}] \leq 2\Pr[x\in
B(r)] \leq 2\delta$.
Also $\delta'/\Pr[g'\in A_{r'}] \leq 2\delta'/\Pr[x\in B(r)] \leq 2\delta'/\delta'' = 2\eta$.
Therefore $\phi(B(r)) \geq 1-O(\eta)$, for those $B(r)$ of size between $\delta'' n$ and $\delta n$, as required.

To deal with Hamming balls of size smaller than $\delta'' n$, 
we simply apply the hypercontractive inequality on $H$ directly.
For any subset $B$ on $H$ (not necessarily a Hamming ball), we have
\[ \Pr[y\in B \mid x\in B] \leq \Pr[x\leq B]^{c\eps/\log(1/\delta)} \]
for some $c > 0$.
The exponent $c\eps/\log(1/\delta)$ is from~\cite{W}.
Taking $\delta'' := \eta^{\log(1/\delta)/c\eps}$, we see that $\phi(B) \geq 1-\eta$ whenever $|B| \leq \delta'' n$.
\end{proof}

The key point of Lemma~\ref{l:hamming} is that everything is independent of $k$.
Therefore, given any $\eps$, we just need to set $k \geq 1/\delta$ so that $H$ has a set of expansion $\eps$ and size $\leq \delta n$ (by Claim~\ref{c:dimension}),
while the evolving set process only explores Hamming balls 
(by Lemma~\ref{l:level})
and all Hamming balls of size $\leq \delta_{\eps,\eta} n$ have expansion $\geq
1 - \eps$ (by Lemma~\ref{l:hamming}).
This proves Theorem~\ref{t:hypercube} that the evolving set process fails on the $k$-ary $\eps$-noisy hypercube with probability one.

\subsection*{Random Walks, Personal Pagerank, and Heat Kernels}

The random walk local graph partitioning algorithm~\cite{ST,ABS,KL12} works by computing the vector $p_t := A^t \chi_v$ for every vertex $v$ for $1 \leq t \leq O(\log n)$, sorting the vertices so that $p_t(1) \geq p_t(2) \geq \ldots \geq p_t(n)$, and trying all the level sets $\{1, \ldots, j\}$ for $1 \leq j \leq n$.
Using the same $k$-ary noisy hypercube example,
it is not difficult to see from Lemma~\ref{l:level} that all the level sets that the algorithm explored are Hamming balls, 
and thus the random walk algorithm will also fail to disprove the small-set expanson hypothesis.

The same argument also applies to the personal pagerank algorithm~\cite{ACL,ZLM} and the heat kernel algorithm, which work by computing some related vectors and trying all the level sets.
We note that the vectors used by these algorithms are just convex combinations of the random walk vectors $A^t \chi_v$ for different $t$, and therefore all the level sets are Hamming balls, and hence these algorithms also fail for the same reason.

We believe that this example exposes the limitations of all known local graph partitioning algorithms, and can be used as a basis to prove further lower bounds.
An interesting question is to study whether the analysis of the $O(\sqrt{\phi(S) \log(|S|)})$-approximation of the evolving set algorithm in Theorem~\ref{t:approx} is tight when $\eps = 1/\log(|S|)$.

\section*{Acknowledgement}

This research started while Tsz Chiu and Lap Chi were long-term participants in the Algorithmic Spectral Graph Theory program at the Simons Institute for the Theory of Computing in Fall 2014.
Tsz Chiu completed this work while he was a postdoc at EPFL.
Siu On completed this work while he was a postdoc at Microsoft Research New England, and he would like to thank Lorenzo Orecchia for helpful discussions.
Lap Chi completed this work while he was a visiting researcher in UC Berkeley, and he would like to thank Luca Trevisan for financial support through the NSF Grant 1216642.
We thank Shayan Oveis Gharan for comments that improved the presentation of this paper.

\bibliographystyle{plain}

\end{document}